\documentclass[nopacs,nokeys,11pt,notitlepage,pra]{revtex4}

\usepackage{times}

\usepackage{amsmath}
%%%%%%%%%%%%%%%%%%%%
\usepackage[titletoc,title]{appendix}
\usepackage{graphicx,epic,eepic,epsfig,amsmath,latexsym,amssymb,verbatim,color}
\usepackage{dsfont}
\usepackage{float}
\usepackage{tikz}
\usepackage[strict]{changepage}
\usepackage{hyperref}
\hypersetup{colorlinks=true,citecolor=blue,linkcolor=blue,filecolor=blue,urlcolor=blue,breaklinks=true}

%\hyphenation{op-tical net-works semi-conduc-tor}
\usepackage[marginal]{footmisc}
\usepackage{url}
\usepackage{theorem}
\newtheorem{definition}{Definition}
\newtheorem{proposition}[definition]{Proposition}
\newtheorem{lemma}[definition]{Lemma}

\newtheorem{theorem}[definition]{Theorem}
\newtheorem{corollary}[definition]{Corollary}

\def\squareforqed{\hbox{\rlap{$\sqcap$}$\sqcup$}}
\def\qed{\ifmmode\squareforqed\else{\unskip\nobreak\hfil
\penalty50\hskip1em\null\nobreak\hfil\squareforqed
\parfillskip=0pt\finalhyphendemerits=0\endgraf}\fi}
\def\endenv{\ifmmode\;\else{\unskip\nobreak\hfil
\penalty50\hskip1em\null\nobreak\hfil\;
\parfillskip=0pt\finalhyphendemerits=0\endgraf}\fi}
\newenvironment{proof}{\noindent \textbf{{Proof~} }}{\hfill $\blacksquare$}
\newcounter{remark}
\newenvironment{remark}[1][]{\refstepcounter{remark}\par\medskip\noindent%
\textbf{Remark~\theremark #1 } }{\medskip}

\newcounter{example}

% Align := properly in math mode
\mathchardef\ordinarycolon\mathcode`\:
\mathcode`\:=\string"8000
\def\vcentcolon{\mathrel{\mathop\ordinarycolon}}
\begingroup \catcode`\:=\active
  \lowercase{\endgroup
  \let :\vcentcolon
  }

%%% boxes
\usepackage{cleveref}
\usepackage{graphicx}
\usepackage{xcolor}

\RequirePackage[framemethod=default]{mdframed}
\newmdenv[skipabove=7pt,
skipbelow=7pt,
backgroundcolor=darkblue!15,
innerleftmargin=5pt,
innerrightmargin=5pt,
innertopmargin=5pt,
leftmargin=0cm,
rightmargin=0cm,
innerbottommargin=5pt,
linewidth=1pt]{tBox}

\newmdenv[skipabove=7pt,
skipbelow=7pt,
backgroundcolor=blue2!25,
innerleftmargin=5pt,
innerrightmargin=5pt,
innertopmargin=5pt,
leftmargin=0cm,
rightmargin=0cm,
innerbottommargin=5pt,
linewidth=1pt]{dBox}
\newmdenv[skipabove=7pt,
skipbelow=7pt,
backgroundcolor=darkkblue!15,
innerleftmargin=5pt,
innerrightmargin=5pt,
innertopmargin=5pt,
leftmargin=0cm,
rightmargin=0cm,
innerbottommargin=5pt,
linewidth=1pt]{sBox}
\definecolor{darkblue}{RGB}{0,76,156}
\definecolor{darkkblue}{RGB}{0,0,153}
\definecolor{blue2}{RGB}{102,178,255}

%%% adaptive paranthesis
% \makeatletter
% \def\resetMathstrut@{%
%     \setbox\z@\hbox{%
%         \mathchardef\@tempa\mathcode`\[\relax
%         \def\@tempb##1"##2##3{\the\textfont"##3\char"}%
%         \expandafter\@tempb\meaning\@tempa \relax
%     }%
%     \ht\Mathstrutbox@\ht\z@ \dp\Mathstrutbox@\dp\z@}
% \makeatother
% \begingroup
% \catcode`(\active \xdef({\left\string(}
% \catcode`)\active \xdef){\right\string)}
% \endgroup
% \mathcode`(="8000 \mathcode`)="8000

\newcommand{\nc}{\newcommand}
\nc{\rnc}{\renewcommand}
\nc{\beg}{\begin{equation}}
\nc{\eeq}{{\end{equation}}}
\nc{\beqa}{\begin{eqnarray}}
\nc{\eeqa}{\end{eqnarray}}
\nc{\lbar}[1]{\overline{#1}}
\nc{\bra}[1]{\langle#1|}
\nc{\ket}[1]{|#1\rangle}
\nc{\ketbra}[2]{|#1\rangle\!\langle#2|}
\nc{\braket}[2]{\langle#1|#2\rangle}

\nc{\proj}[1]{| #1\rangle\!\langle #1 |}
\nc{\avg}[1]{\langle#1\rangle}
%\rnc{\max}{\operatorname{max}}
\nc{\Rank}{\operatorname{Rank}}
\nc{\smfrac}[2]{\mbox{$\frac{#1}{#2}$}}
\nc{\tr}{\operatorname{Tr}}
\nc{\ox}{\otimes}
\nc{\dg}{\dagger}
\nc{\dn}{\downarrow}
\nc{\cA}{{\cal A}}
\nc{\cB}{{\cal B}}
\nc{\cC}{{\cal C}}
\nc{\cD}{{\cal D}}
\nc{\cE}{{\cal E}}
\nc{\cF}{{\cal F}}
\nc{\cG}{{\cal G}}
\nc{\cH}{{\cal H}}
\nc{\cI}{{\cal I}}
\nc{\cJ}{{\cal J}}
\nc{\cK}{{\cal K}}
\nc{\cL}{{\cal L}}
\nc{\cM}{{\cal M}}
\nc{\cN}{{\cal N}}
\nc{\cO}{{\cal O}}
\nc{\cP}{{\cal P}}
\nc{\cQ}{{\cal Q}}
\nc{\cR}{{\cal R}}
\nc{\cS}{{\cal S}}
\nc{\cT}{{\cal T}}
\nc{\cV}{{\cal V}}
\nc{\cX}{{\cal X}}
\nc{\cY}{{\cal Y}}
\nc{\cZ}{{\cal Z}}
\nc{\cW}{{\cal W}}
\nc{\csupp}{{\operatorname{csupp}}}
\nc{\qsupp}{{\operatorname{qsupp}}}
\nc{\var}{{\operatorname{var}}}
\nc{\rar}{\rightarrow}
\nc{\lrar}{\longrightarrow}
\nc{\polylog}{{\operatorname{polylog}}}
\nc{\1}{{\mathds{1}}}
\nc{\wt}{{\operatorname{wt}}}
\nc{\av}[1]{{\left\langle {#1} \right\rangle}}
\nc{\supp}{{\operatorname{supp}}}

\def\a{\alpha}

\def\di{\diamondsuit}

\def\ve{\varepsilon}

\def\x{\xi}

\def\o{\omega}

\def\O{\Omega}
\def\w{\omega}

\nc{\RR}{{{\mathbb R}}}
\nc{\CC}{{{\mathbb C}}}
\nc{\FF}{{{\mathbb F}}}
\nc{\NN}{{{\mathbb N}}}
\nc{\ZZ}{{{\mathbb Z}}}
\nc{\PP}{{{\mathbb P}}}
\nc{\QQ}{{{\mathbb Q}}}
\nc{\UU}{{{\mathbb U}}}
\nc{\EE}{{{\mathbb E}}}
\nc{\id}{{\operatorname{id}}}

\nc{\CHSH}{{\operatorname{CHSH}}}

% wcl
\nc{\be}{\begin{equation}}
\nc{\ee}{{\end{equation}}}
\nc{\bea}{\begin{eqnarray}}
\nc{\eea}{\end{eqnarray}}
\nc{\<}{\langle}
\rnc{\>}{\rangle}
\nc{\Hom}[2]{\mbox{Hom}(\CC^{#1},\CC^{#2})}
\nc{\rU}{\mbox{U}}

% switching between |i> or |e_i> for the standard basis
% \nc{\ob}[1]{e_{#1}}
\nc{\ob}[1]{#1}

\nc{\SEP}{{\text{SEP}}}
\nc{\NS}{{\text{NS}}}
\nc{\LOCC}{{\text{LOCC}}}
\nc{\PPT}{{\text{PPT}}}
\nc{\EXT}{{\text{EXT}}}
% \nc{\NS}{{\text{NS}}}
\nc{\Sym}{{\operatorname{Sym}}}

%\nc{\ln}{{\text{ln}}}

\nc{\ERLO}{{E_{\text{r,LO}}}}
\nc{\ERLOCC}{{E_{\text{r,LOCC}}}}
\nc{\ERPPT}{{E_{\text{r,PPT}}}}
\nc{\ERLOCCinfty}{{E^{\infty}_{\text{r,LOCC}}}}
\nc{\Aram}{{\operatorname{\sf A}}}

\newcommand{\Choi}{Choi-Jamio\l{}kowski }

\usepackage{tikz}
\usepackage{hyperref}
\hypersetup{colorlinks=true,citecolor=blue,linkcolor=blue,filecolor=blue,urlcolor=blue,breaklinks=true}

%% tikz corrodinate
\makeatletter
\def\grd@save@target#1{%
  \def\grd@target{#1}}
\def\grd@save@start#1{%
  \def\grd@start{#1}}
\tikzset{
  grid with coordinates/.style={
    to path={%
      \pgfextra{%
        \edef\grd@@target{(\tikztotarget)}%
        \tikz@scan@one@point\grd@save@target\grd@@target\relax
        \edef\grd@@start{(\tikztostart)}%
        \tikz@scan@one@point\grd@save@start\grd@@start\relax
        \draw[minor help lines,magenta] (\tikztostart) grid (\tikztotarget);
        \draw[major help lines] (\tikztostart) grid (\tikztotarget);
        \grd@start
        \pgfmathsetmacro{\grd@xa}{\the\pgf@x/1cm}
        \pgfmathsetmacro{\grd@ya}{\the\pgf@y/1cm}
        \grd@target
        \pgfmathsetmacro{\grd@xb}{\the\pgf@x/1cm}
        \pgfmathsetmacro{\grd@yb}{\the\pgf@y/1cm}
        \pgfmathsetmacro{\grd@xc}{\grd@xa + \pgfkeysvalueof{/tikz/grid with coordinates/major step}}
        \pgfmathsetmacro{\grd@yc}{\grd@ya + \pgfkeysvalueof{/tikz/grid with coordinates/major step}}
        \foreach \x in {\grd@xa,\grd@xc,...,\grd@xb}
        \node[anchor=north] at (\x,\grd@ya) {\pgfmathprintnumber{\x}};
        \foreach \y in {\grd@ya,\grd@yc,...,\grd@yb}
        \node[anchor=east] at (\grd@xa,\y) {\pgfmathprintnumber{\y}};
      }
    }
  },
  minor help lines/.style={
    help lines,
    step=\pgfkeysvalueof{/tikz/grid with coordinates/minor step}
  },
  major help lines/.style={
    help lines,
    line width=\pgfkeysvalueof{/tikz/grid with coordinates/major line width},
    step=\pgfkeysvalueof{/tikz/grid with coordinates/major step}
  },
  grid with coordinates/.cd,
  minor step/.initial=.2,
  major step/.initial=1,
  major line width/.initial=2pt,
}
\makeatother

%%%%%%%%%%%%%%%%%%%%%%%%%%%%%%%%%%%%%%%%%%%%%%%%%%%%%%%%%%%%%%%%%%%

\begin{document}

\title{Quantum Channel Simulation and the Channel's Smooth Max-Information}

\author{Kun Fang$^{1,2}$}
\email{kf383@cam.ac.uk}

\author{Xin Wang$^{1,3}$}

\author{Marco Tomamichel$^{1}$}

\author{Mario Berta$^{4}$}

\affiliation{$^1$Centre for Quantum Software and Information, Faculty of Engineering and Information Technology, University of Technology Sydney, NSW 2007, Australia}

\affiliation{$^2$ Department of Applied Mathematics and Theoretical Physics,\\ University of Cambridge, Cambridge, CB3 0WA, UK}

\affiliation{$^3$ Joint Center for Quantum Information and Computer Science, University of Maryland, USA}

\affiliation{$^4$ Department of Computing, Imperial College London, London, United Kingdom}

\thanks{A preliminary version of this paper was accepted as talk presentations at the 2018 IEEE International Symposium on Information Theory (ISIT 2018).}

\begin{abstract}
We study the general framework of quantum channel simulation, that is, the ability of a quantum channel to simulate another one using different classes of codes. First, we show that the minimum error of simulation and the one-shot quantum simulation cost under no-signalling assisted codes are given by semidefinite programs. Second, we introduce the channel's smooth max-information, which can be seen as a one-shot generalization of the mutual information of a quantum channel. We provide an exact operational interpretation of the channel's smooth max-information as the one-shot quantum simulation cost under no-signalling assisted codes, which significantly simplifies the study of channel simulation and provides insights and bounds for the case under entanglement-assisted codes. Third, we derive the asymptotic equipartition property of the channel's smooth max-information; i.e., it converges to the quantum mutual information of the channel in the independent and identically distributed asymptotic limit. This implies the quantum reverse Shannon theorem in the presence of no-signalling correlations. Finally, we explore the simulation cost of various quantum channels.
\end{abstract}

% \date{\today}
\maketitle
% \tableofcontents

%%%%%%%%%%%%%%%%%%%%%%%%%%%%%%%%%%%%%%%%%%%%%%%%%%%%%%%%%%%%%%%%%%%

\section{Introduction}

%%%%Big background%%%%%%%%%%%%%%%%%%%%%%%%%%%%%%%%%%%
Channel simulation is a fundamental problem in information theory. It asks how to use a (noisy) channel from Alice (A) to Bob (B) to simulate another (noisy) channel also from A to B~\cite{Kretschmann2004,Bennett2002}. Depending on the different resources available between A and B, this simulation problem has many variants.

%%%%Important previous results%%%%%%%%%%%%%%%%%%%%%%%
For classical channels, Shannon's noisy channel coding theorem determines the capability of noisy classical channels to simulate noiseless ones~\cite{Shannon1948}. Dual to this famous coding theorem,  the `reverse Shannon theorem' concerns the use of noiseless channels to simulate noisy ones~\cite{Bennett2002}. Specifically, every channel can be simulated using an amount of classical communication equal to the capacity of the channel when there is free shared randomness between A and B in the asymptotic setting~\cite{Bennett2002}.
For quantum channels, the case when A and B share an unlimited amount of entanglement has been completely solved by the quantum reverse Shannon theorem (QRST)~\cite{Bennett2014,Berta2011a}, which states that the rate to optimally simulate a quantum channel in the asymptotic setting is determined by its entanglement-assisted classical capacity. In the zero-error scenario~\cite{Shannon1956}, using one channel to simulate another exactly with the aid of no-signalling correlations has been studied recently in~\cite{Cubitt2011,Duan2016,Wang2016b}, while the simulation with free quantum operations that completely preserve positivity of the partial transpose has been studied in~\cite{wang2018exact}. The problem of quantum channel simulations via other quantum resources has also been investigated in~\cite{Berta2013,BenDana2017}. 

%%%%%%Non-asymptotic motivation and IMAX%%%%%%%%%%%%%%%%%%%
In realistic settings, the number of channel uses is necessarily limited, and it is not easy to perform encoding and decoding circuits coherently over a large number of qubits in the near future. Therefore, it is important to characterize how well we can simulate a quantum channel from another with finite resources. The first step in this direction is to consider the one-shot setting. One-shot analysis has recently attracted great interest in classical information theory (see, e.g.,~\cite{Polyanskiy2010,Hayashi2009}) and quantum information theory (see, e.g.,~\cite{Tomamichel2012,Datta2013c,Buscemi2010b,Matthews2014,Wang2016g,Renes2011,Datta2013,Berta2016,Anshu2016b}). In one-shot information theory, the smooth max-information of a quantum state~\cite{Berta2011a} and its generalizations~\cite{Ciganovic2014} are all basic and useful quantities, which have various applications in quantum rate distortion theory as well as the physics of quantum many-body systems.

%%%%%%%Structure of this work%%%%%%%%%%%%%%%%%%%%%%%%%%%
In this work, we focus on quantum channel simulation in both the one-shot and asymptotic regimes. The central quantity we introduce is the channel's smooth max-information. Our results can be summarized as follows. In Section~\ref{Channel simulation and codes}, we introduce the task of channel simulation and its related quantities. In Section~\ref{channel simulation part}, we develop a framework for quantum channel simulation assisted with different codes in the one-shot regime, where one has access only to a single use of the quantum channel. In particular, we characterize the minimum error of channel simulation under the so-called no-signalling (NS) codes \cite{Cubitt2011,Duan2016}, which allow the encoder and decoder to share non-local quantum correlations. Such codes are no-signalling from the sender to the receiver and vice versa, representing the ultimate limit of quantum codes obeying quantum mechanics and providing converse bounds for entanglement-assisted codes. The cost of approximately simulating a channel via noiseless quantum channels under NS-assisted codes can be characterized as an semidefinite program (SDP)~\cite{Vandenberghe1996}. In Section~\ref{max information part}, we introduce the channel's smooth max-information, which can be seen as a one-shot generalization of the mutual information of a quantum channel. Notably, this newly introduced entropy has the exact operational interpretation as the one-shot quantum simulation cost under NS-assisted codes. Then we prove its asymptotic equipartition property which directly implies the quantum reverse Shannon theorem in the presence of no-signalling correlations. 

In the setting of the entanglement-assisted one-shot capacity of quantum channels, Matthews and Wehner gave a converse bound in terms of the channel's hypothesis testing relative entropy~\cite{Matthews2014}. Moreover, a subset of us recently showed that the activated NS-assisted one-shot capacity is exactly given by the channel's hypothesis testing relative entropy~\cite{Wang2017a} -- generalizing the corresponding classical results~\cite{Polyanskiy2010,Matthews2012}. This suggests that the operational min- and max-type one-shot analogues of the channel's mutual information are the channel's hypothesis testing relative entropy and the channel's smooth max-information, respectively.

In Section~\ref{sec:examples}, as applications, we evaluate the cost of simulating fundamental quantum channels with finite resources. In particular, we derive a linear program to evaluate the finite blocklength simulation cost of quantum depolarizing channels.

%%%%%%%%%%%%%%%%%%%%%%%%%%%%%%%%%%%%%%%%%%%%%%%%%%%%%%%%%%%%%%%%%%%

\section{Channel simulation and codes}\label{Channel simulation and codes}

Let us now formally introduce the task of channel simulation and some basic notations. A quantum channel (quantum operation) $\cN_{A\to B}$ is a completely positive (CP) and trace-preserving (TP) linear map from operators acting on a finite-dimensional Hilbert space $\cH_{A}$ to operators acting on a finite-dimensional Hilbert space $\cH_{B}$. For any CPTP map $\cN_{A\to B}$, we will frequently use its \Choi matrix, which is defined as $J_{\cN}:= \sum_{i,j=0}^{|A|-1} \ket{i}\bra{j} \ox \cN_{A\to B} (\ket{i}\bra{j})$ where $\{\ket{i}\}$ are orthonormal basis in $\cH_A$. 

The general framework of quantum channel simulation is shown in Fig.~\ref{codes}. Here is how the simulation works. First, Alice performs some pre-processing via an encoder $\cE$ on the input system $A_i$ and outputs a quantum system $A_o$. Then she sends the output system to the shared quantum channel $\cN_{A_o\to B_i}$. At Bob's side, he receives a quantum system $B_i$ from the channel and performs his post-precessing via a decoder $\cD$. Finally, Bob outputs a quantum system $B_o$. The whole process can be considered as a quantum channel from Alice's input $A_i$ to Bob's output $B_o$, which we denote as $\widetilde M_{A_i\to B_o}$. The aim of simulation is to choose the best coding scheme to make the effective channel $\widetilde M_{A_i\to B_o}$ as similar as to the given target channel $\cM$. If we remove the channel $\cN$ from Fig.~\ref{codes}, we are left with a map with two inputs $A_i$, $B_i$ and two outputs $A_o$, $B_o$. We denote this map as $\Pi$, which generalizes the usual encoding scheme $\cE$ and decoding scheme $\cD$. Then the effective channel can be written as $\widetilde \cM_{A_i\to B_o}=\Pi_{A_iB_i\to A_oB_o}\circ\cN_{A_o\to B_i}$. Note that $\Pi$ is also known as a bipartite quantum supermap in some literatures~\cite{Chiribella2008,Gour2018,Gour2018a} since it maps a quantum channel to another quantum channel.

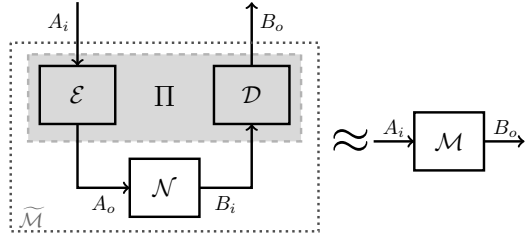
\begin{figure}[t]
\centering
\resizebox {7cm} {!} {
\begin{tikzpicture}[scale = 1]
    % \draw (-4,-4) to[grid with coordinates] (6,4);
    \def\xbb{0.6};\def\xb{1.5};\def\xsh{1.3};\def\ysh{1};
    \def\ya{1.3};\def\yc{3.4};\def\lo{0.2};\def\loo{0.4};
    \pgfmathsetmacro\xa{-\xb};\pgfmathsetmacro\xaa{-\xbb};
    \pgfmathsetmacro\xc{\xa-\xsh/2};\pgfmathsetmacro\xd{\xa+\xsh/2};
    \pgfmathsetmacro\xe{\xb-\xsh/2};\pgfmathsetmacro\xf{\xb+\xsh/2};
    \pgfmathsetmacro\yb{\ya+\ysh};
    \draw[very thick,dashed,fill=black!50,opacity=0.3] (\xc-\lo,\yb+\lo) rectangle (\xf+\lo,\ya-0.3);
    \draw[very thick,->] (\xa,\yc) -- node[left,shift={(0,0.2)}] {$A_i$} (\xa,\yb);
    \draw[very thick,<-] (\xb,\yc) -- node[right,shift={(0,0.2)}] {$B_o$} (\xb,\yb);
    \draw[very thick] (\xc,\yb) rectangle (\xd,\ya) node[midway] {\large $\cE$};
    \draw[very thick] (\xe,\yb) rectangle (\xf,\ya) node[midway] {\large $\cD$};
    \draw[very thick,->] (\xa,\ya) -- (\xa,0.2) -- node[below] {$A_o$} (\xaa,0.2);
    \draw[very thick,<-] (\xb,\ya) -- (\xb,0.2) -- node[below] {$B_i$} (\xbb,0.2);
    \draw[very thick] (\xaa,\ysh/2+0.2) rectangle (\xbb,-\ysh/2+0.2) node[midway] {\large $\cN$};
    %node[midway] {\Large $\Pi$}
    \node[] (dots) at (0,\yb-0.52) {\Large $\Pi$};
    \draw[very thick,dotted,black!70] (\xc-\loo-0.1,\yb+\loo) rectangle (\xf+\loo+0.1,-\ysh/2-\lo+0.15);
    %\node[blue] (dots) at (\xc-0.02,0.87) {$\Pi$};
    \node[black!60] (dots) at (\xc-0.15,-\ysh/2+0.2) {$\widetilde \cM$};
    \draw[very thick] (4.3,1.5) rectangle (5.5,0.5) node[midway] {\large $\cM$};
    \draw[very thick,->] (3.6,1) -- (4.3,1) node[above,shift={(-0.35,-0.06)}] {$A_i$};
    \draw[very thick,->] (5.5,1) -- (6.2,1) node[above,shift={(-0.33,-0.06)}] {$B_o$};
    \node[very thick] at (3.2,1) {\Huge $\approx$};
\end{tikzpicture}
}
\caption{General framework of quantum channel simulation.}
\label{codes}
\end{figure}

In particular, a bipartite quantum supermap $\Pi_{A_iB_i\to A_oB_o}$ is $A$ to $B$ no-signalling if $A$ cannot send classical information to $B$ by using $\Pi$. That is, for any quantum states $\rho_{1,A_i}$, $\rho_{2,A_i}$ and linear operator $\sigma_{B_i}$, we have $\tr_{A_o} \Pi(\rho_{1,A_i}\ox \sigma_{B_i}) = \tr_{A_o} \Pi(\rho_{2,A_i}\ox \sigma_{B_i})$, which is equivalent to the semidefinite condition~\cite{Duan2016} $\tr_{A_o}J_{\Pi}=\1_{A_i}/|A_i|\ox\tr_{A_oA_i}J_{\Pi}$ with $J_{\Pi}$ being the \Choi matrix of $\Pi$. Similarly, $B$ to $A$ no-signalling can be characterized by $\tr_{B_o}J_{\Pi}=\1_{B_i}/|B_i|\ox\tr_{B_iB_o}J_{\Pi}$. 
Note that the bipartite quantum supermap $\Pi$ in this work is required at least to be $B$ to $A$ no-signalling, which ensures the composition of $\Pi$ and $\cN$ leads to another quantum channel~\cite{Chiribella2008,Duan2016}. It is also worth mentioning that there is an isomorphism from bipartite quantum supermaps to bipartite quantum operations, and that the corresponding bipartite quantum operation is required to be no-signalling.

In the task of channel simulation, we say $\Pi$ is an $\O$-assisted code if it can be implemented by local operations with $\O$-assistance.  In the following, we eliminate $\O$ for the case of unassisted codes. We write $\O=\rm{NS}$ and $\O=\rm{PPT}$ for NS-assisted and positive-partial-transpose-preserving-assisted (PPT-assisted) codes, respectively~\cite{Leung2015c,Wang2017d}. In particular,
\begin{itemize}
	\item an unassisted code reduces to the product of encoder and decoder, i.e., $\Pi= \cD_{B_i\to B_o}\cE_{A_i\to A_o}$;
	\item a NS-assisted code corresponds to a bipartite quantum supermap which is no-signalling from Alice to Bob and vice-versa;
	\item a PPT-assisted code corresponds to a bipartite quantum supermap whose Choi-Jamio\l{}kowski matrix is positive under partial transpose over systems $B_iB_o$.
\end{itemize}

For any two quantum channels $\cN$ and $\cM$, the minimum error of simulation from $\cN$ to $\cM$ under $\O$-assisted codes is defined as 
\begin{align}\label{simulation}
\omega_\O(\cN,\cM):=\frac12 \inf_{\Pi \in \O}\|\Pi\circ\cN-\cM\|_\di,
\end{align}
where $\|\cF\|_\di:=\sup_{k\in \mathbb N} \sup_{\|X\|_1 \leq 1}\|(\cF\ox \id_k)(X)\|_1$ denotes the diamond norm and $\|X\|_1:=\tr \sqrt{X^\dagger X}$ denotes the Schatten~$1$-norm. The channel simulation rate from $\cN$ to $\cM$ under $\O$-assisted codes is defined as
\begin{align}
S_\O(\cN,\cM):=\lim_{\ve\to 0}\inf \left\{\frac{n}{m}\ \Big|\ \o_{\O}(\cN^{\ox n},\cM^{\ox m}) \leq \ve\right\},
\end{align}
where the infimum is taken over ratios $\frac{n}{m}$ with $n,m \in \mathbb N$. In this framework of channel simulation, the classical capacity $C(\cN)$ and the quantum capacity $Q(\cN)$ of the channel $\cN$ are respectively given by 
\begin{align}
\label{C Q simulation}
C(\cN)=S\big(\cN, \widehat{\mathrm{id}}_2\big)^{-1} \ \text{and} \ \ Q(\cN)=S(\cN, \id_2)^{-1},
\end{align}
where $\widehat{\id}_2$ is the one-bit noiseless channel and $\id_2$ is the one-qubit noiseless channel.

If we consider simulating the given channel $\cN$ via an $m$-dimensional noiseless quantum channel~$\id_m$, then the one-shot $\ve$-error quantum simulation cost under $\O$-assisted codes is defined as
\begin{align}\label{one shot cost definition}
S^{(1)}_{\Omega,\varepsilon}(\cN) := \log \inf \big\{m \in \mathbb N \ | \ \omega_\O(\id_m,\cN) \leq \varepsilon \big\},
\end{align}
where the logarithms in this work are taken in the base
two.
The asymptotic quantum simulation cost is given by the regularization
\begin{align}\label{eq: definition asymp cost}
S_\O(\cN) = \lim_{\ve \to 0} \lim_{n\to \infty} \frac{1}{n} S^{(1)}_{\Omega,\ve}(\cN^{\ox n}).
\end{align}

%%%%%%%%%%%%%%%%%%%%%%%%%%%%%%%%%%%%%%%%%%%%%%%%%%%%%%%%%%%%%%%%%%%

\section{Channel simulation via noisy quantum channels}\label{channel simulation part}

Based on the definitions in the above section, we show that the minimum error of simulation under NS-assisted (and PPT-assisted) codes can be given by SDPs. The one-shot $\ve$-error quantum simulation cost under NS-assisted codes can also be given by an SDP. These SDPs can be easily implemented for small blocklength and they also lay the foundation of analysis in the following sections. 

\begin{proposition}\label{condition of simulation}
	For any two quantum channels $\cN$ and $\cM$ with corresponding Choi-Jamio\l{}kowski matrices $J_\cN$ and $J_\cM$, the minimum error of simulation from $\cN$ to $\cM$ under NS-assisted codes $\omega_{\rm NS}(\cN,\cM)$ is given by the following SDP,
	\begin{subequations}\label{SDP S}
		\begin{align}
		\inf & \quad \ \gamma\\
		\text{\rm s.t.} &\quad  \tr_{B_o} Y_{A_iB_o} \leq \gamma \1_{A_i},\\
		& \quad \ Y_{A_iB_o}\ge  J_{\widetilde{\cM}}-J_\cM,\ Y_{A_iB_o} \geq 0, \\
		& \quad \ J_{\widetilde{\cM}} = \tr_{A_oB_i} (J_{\cN}^T\ox\1_{A_iB_o})J_{\Pi}, \label{condi 1}\\
		& \quad  \ J_{\Pi}\ge 0,\ \tr_{A_oB_o} J_{\Pi} = \1_{A_iB_i}, \hspace{1.55cm} \mathrm{(CP,TP)}\label{CPTP}\\
		&\quad \, \tr_{A_o}J_{\Pi}=\1_{A_i}/|A_i|\ox\tr_{A_oA_i}J_{\Pi},\label{A not to B} \hspace{0.8cm} (\mathrm{A} \not \to \mathrm{B})\\
		& \quad \, \tr_{B_o}J_{\Pi}=\1_{B_i}/|B_i|\ox\tr_{B_iB_o}J_{\Pi}.\hspace{0.8cm} (\mathrm{B} \not \to \mathrm{A})\label{B not to A}		
		\end{align}
	\end{subequations}
	To obtain $\o_{\rm{NS \cap PPT}}(\cN, \cM)$, we only need to add the PPT constraint $J_{\Pi}^{T_{B_iB_o}}\ge 0$, where $T_{B_iB_o}$ denotes the partial transpose over systems $B_iB_o$.
\end{proposition}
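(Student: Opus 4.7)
The plan is to assemble the SDP from three standard building blocks: Watrous's semidefinite characterization of the diamond norm, the link-product representation of the channel obtained by composing the bipartite supermap $\Pi$ with $\cN$, and the semidefinite characterization of no-signalling supermaps via constraints on their Choi matrices. Since each ingredient is already a semidefinite program, the composition of all three (and later the PPT variant) remains an SDP, which is precisely what the statement claims.

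First, I would fix a bipartite supermap $\Pi_{A_iB_i\to A_oB_o}$ with \Choi matrix $J_\Pi$. Using the link product between $J_\Pi$ and $J_\cN$, the effective channel $\widetilde{\cM}=\Pi\circ\cN$ from $A_i$ to $B_o$ has \Choi matrix
\begin{align*}
J_{\widetilde{\cM}} \;=\; \tr_{A_oB_i}\!\bigl[(J_{\cN}^{T}\ox \1_{A_iB_o})\,J_{\Pi}\bigr],
\end{align*}
which is exactly constraint~\eqref{condi 1}. Then I would translate the supermap requirements into linear semidefinite conditions on $J_\Pi$: complete positivity yields $J_\Pi\ge 0$; trace preservation yields $\tr_{A_oB_o}J_\Pi=\1_{A_iB_i}$, giving~\eqref{CPTP}; and the two no-signalling directions, as recalled in the introduction and derived in~\cite{Duan2016}, give the partial-trace identities~\eqref{A not to B} and~\eqref{B not to A}. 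The latter is the condition that guarantees $\widetilde\cM$ is indeed a channel rather than merely a CP map, independently of $\cN$.

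Next I would invoke Watrous's SDP for the normalized diamond distance between two CPTP maps $\cM_1,\cM_2$ with Choi matrices $J_1,J_2$,
\begin{align*}
\tfrac{1}{2}\|\cM_1-\cM_2\|_\di \;=\; \inf\bigl\{\gamma \,:\, \tr_{B_o}Y_{A_iB_o}\le \gamma\1_{A_i},\; Y_{A_iB_o}\ge J_1-J_2,\; Y_{A_iB_o}\ge 0\bigr\}.
\end{align*}
Applying this with $\cM_1=\widetilde\cM$ and $\cM_2=\cM$ reproduces the objective and the first three inequalities of~\eqref{SDP S}. Merging the outer infimum over $\Pi\in\mathrm{NS}$ (expressed through the Choi-matrix constraints above) with the inner infimum over $Y,\gamma$ into a single joint optimization yields precisely the SDP in the proposition.

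The main technical point to be careful about is the link-product identity and the correct placement of the transpose, since the composition formula depends on the \Choi convention fixed earlier in the paper; a quick sanity check is that whenever $\Pi=\cD_{B_i\to B_o}\cE_{A_i\to A_o}$ factorizes as an unassisted code, the formula reduces to the Choi matrix of $\cD\circ\cN\circ\cE$, recovering the standard encoder--decoder picture. Finally, for $\omega_{\mathrm{NS}\cap\mathrm{PPT}}(\cN,\cM)$, the only modification is the additional linear semidefinite constraint $J_\Pi^{T_{B_iB_o}}\ge 0$ enforcing that $\Pi$ is PPT-preserving across the $B_iB_o$ cut, which evidently preserves the SDP form of the optimization.
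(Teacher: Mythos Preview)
Your proposal is correct and follows essentially the same approach as the paper: the paper's proof likewise invokes Watrous's SDP for the diamond norm, the Choi-matrix composition formula $J_{\widetilde{\cM}} = \tr_{A_oB_i}\!\bigl[(J_{\cN}^{T}\ox \1_{A_iB_o})\,J_{\Pi}\bigr]$ (proved as a separate lemma in the appendix rather than called the ``link product''), and the standard semidefinite characterization of the CP, TP, and no-signalling constraints on $J_\Pi$. The PPT variant is handled identically.
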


\begin{proof}	
	Note that for any two quantum channels $\cN_1, \cN_2$ from $A$ to $B$, the diamond norm of their difference can be expressed as an SDP of the following form~\cite{Watrous2009}: 		
	\begin{align}\label{SDP diamond}
		\frac12 \|\cN_1-\cN_2\|_\di= \inf \big\{\ \gamma \ \big|\ \tr_B Y \leq \gamma \1_A,\ Y \ge J_{\cN_1}-J_{\cN_2},\ Y \ge 0 \ \big\},
	\end{align}
	where $J_{\cN_1}$ and $J_{\cN_2}$ are the corresponding Choi-Jamio\l{}kowski matrices. We denote the Choi-Jamio\l{}kowski matrix of code $\Pi$ as $J_{\Pi}$. From Lemma~\ref{compose channel lemma} in the Appendix, we know that the Choi-Jamio\l{}kowski matrix of the effective channel $\widetilde \cM = \Pi\circ\cN$ is given by
	\begin{align}\label{code compose channel}
	J_{\widetilde{\cM}} = \tr_{A_oB_i} (J_{\cN}^T\ox\1_{A_iB_o})J_{\Pi}.
	\end{align}
	Together with the constraints of the code $\Pi$, we have the resulting SDP~\eqref{SDP S}. The constraints in Eq.~\eqref{CPTP} represent the CP and TP conditions of the bipartite supermap $\Pi$. The constraints in Eqs.~\eqref{A not to B} and~\eqref{B not to A} represent the no-signalling conditions that $A$ cannot signal to $B$ and $B$ cannot signal to $A$, respectively.
\end{proof}

\begin{corollary}\label{min error simulation}
	The minimum error to simulate a quantum channel $\cN$ via a noiseless quantum channel under NS-assisted codes $\omega_{\rm{NS}}(\id_m,\cN)$ is given by the following SDP, 
	\begin{subequations}\label{id error SDP}
	\begin{align}
	\inf & \quad \ \gamma\\
	\text{\rm s.t.} & \quad \tr_{B} Y_{AB} \leq \gamma \1_{A},\\
	& \quad \ Y_{AB}\ge J_{\widetilde{\cN}} - J_\cN,\ Y_{AB} \geq 0,\\
	& \quad \ J_{\widetilde{\cN}} \ge 0,\ \tr_{B} J_{\widetilde{\cN}} = \1_{A},\\
	&\quad \ J_{\widetilde{\cN}} \leq \1_{A} \otimes V_{B},\ \tr V_{B} = m^2.	
	\end{align}
	\end{subequations}
	To obtain $\o_{\rm{NS \cap PPT}}(\id_m,\cN)$, we only need to add the PPT constraint 
	$-\1_{A}\otimes V_{B}^T \leq m J_{\widetilde{\cN}}^{T_B} \leq \1_{A}\otimes V_{B}^T$. 
\end{corollary}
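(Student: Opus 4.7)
The plan is to specialize Proposition~\ref{condition of simulation} to the case where the physical channel is taken to be the $m$-dimensional noiseless channel $\id_m$ (so that $|A_o|=|B_i|=m$) and the target channel is $\cN$, then to eliminate the supermap variable $J_\Pi$ in favour of a single auxiliary operator $V_{B}$.

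First, I substitute $\cN\to\id_m$ and $\cM\to\cN$ in the SDP~\eqref{SDP S}. Using $J_{\id_m}=\ketbra{\Phi}{\Phi}_{A_oB_i}$ with $\ket{\Phi}=\sum_k\ket{k}_{A_o}\ket{k}_{B_i}$, which is invariant under transposition, the effective-channel identity~\eqref{condi 1} collapses to
\begin{align*}
J_{\widetilde{\cN}}=\bra{\Phi}_{A_oB_i}\,J_{\Pi}\,\ket{\Phi}_{A_oB_i},
\end{align*}
and the CP, TP conditions~\eqref{CPTP} on $J_\Pi$ then immediately force $J_{\widetilde{\cN}}\ge 0$ and $\tr_{B}J_{\widetilde{\cN}}=\1_{A}$. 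The diamond-norm slack variables $Y_{AB}$ and $\gamma$ transcribe verbatim from Proposition~\ref{condition of simulation}.

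The main step is to show that the no-signalling conditions~\eqref{A not to B}--\eqref{B not to A} on $J_\Pi$, combined with CPTP of $\Pi$ and the identity above, are equivalent to the existence of $V_{B}\ge 0$ satisfying $\tr V_{B}=m^2$ and $J_{\widetilde{\cN}}\le\1_{A}\ox V_{B}$. For the necessary direction, I would define $V_B$ as an appropriate contraction of $J_\Pi$ with $\ket{\Phi}_{A_oB_i}$ (for example, proportional to $\bra{\Phi}_{A_oB_i}\tr_{A_i}J_\Pi\ket{\Phi}_{A_oB_i}$ or a closely related marginal), so that $J_{\widetilde{\cN}}\le\1_{A}\ox V_{B}$ follows from the A-to-B no-signalling condition together with $J_\Pi\ge 0$, while $\tr V_{B}=m^2$ follows from the B-to-A no-signalling condition combined with TP. For the sufficient direction, given any feasible pair $(J_{\widetilde{\cN}},V_{B})$, I would construct an explicit PSD supermap by mixing a ``diagonal'' contribution supported on $\ketbra{\Phi}{\Phi}$ on $A_oB_i$ with a complementary contribution on its orthogonal subspace, with coefficients tuned so that TP and both no-signalling factorisations hold simultaneously.

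For the PPT-assisted case, adding the PPT constraint $J_\Pi^{T_{B_iB_o}}\ge 0$ and using the identity $\ketbra{\Phi}{\Phi}^{T_{B_i}}=F_{A_oB_i}$ (the swap operator on $A_oB_i$), the partial-transpose positivity of the candidate $J_\Pi$ translates into the two-sided bound $-\1_{A}\ox V_{B}^T\le m\,J_{\widetilde{\cN}}^{T_{B}}\le\1_{A}\ox V_{B}^T$ stated in the corollary. The main technical obstacle is the sufficient direction of the NS equivalence: producing a single PSD supermap $J_\Pi$ that simultaneously satisfies TP, both no-signalling factorisations, and (in the PPT case) the partial-transpose positivity from an abstract feasibility certificate $(J_{\widetilde{\cN}},V_B)$ requires careful tuning of the ansatz coefficients; once this construction is fixed, the remaining verifications reduce to routine algebra using $\tr_{A_o}\ketbra{\Phi}{\Phi}=\1_{B_i}$ and its $B_i$-analogue.
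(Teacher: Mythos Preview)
Your overall plan---specialize Proposition~\ref{condition of simulation} to $\id_m$, then prove a two-way equivalence between the $J_\Pi$ constraints and the $(J_{\widetilde\cN},V_B)$ constraints---is viable and genuinely different from the paper's route. The paper does \emph{not} argue necessary and sufficient directions separately. Instead it uses the $U_{A_o}\otimes\bar U_{B_i}$ invariance of $J_m$ to twirl $J_\Pi$, so that without loss of generality $J_\Pi=\tfrac{J_m}{m}\otimes H+(\1-\tfrac{J_m}{m})\otimes K$; both directions then drop out of a single computation, and the PPT case follows from the spectral decomposition of the swap. Your ``sufficient'' construction is precisely this ansatz, so that half of your plan coincides with the paper's.

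The gap is in your necessary direction. Your proposed $V_B\propto\bra{\Phi}_{A_oB_i}\tr_{A_i}J_\Pi\ket{\Phi}_{A_oB_i}$ equals $\tr_{A_i}J_{\widetilde\cN}$ and depends only on $J_{\widetilde\cN}$; it does \emph{not} in general satisfy $J_{\widetilde\cN}\le\1_A\otimes V_B$. A concrete counterexample: take $|A_i|=5$, $|B_o|=2$, $m=2$ and the classical channel with $J_{\widetilde\cN}=\ketbra{0}{0}_A\otimes\ketbra{0}{0}_B+\sum_{i=1}^4\ketbra{i}{i}_A\otimes\ketbra{1}{1}_B$. This is achievable (e.g.\ $V_B=\ketbra{0}{0}+3\ketbra{1}{1}$ works), yet your $V_B=\tfrac{4}{5}(\ketbra{0}{0}+4\ketbra{1}{1})$ fails since $\ketbra{0}{0}\not\le\tfrac{4}{5}\ketbra{0}{0}+\cdots$. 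A correct direct argument exists, but with a different marginal: set $V_B:=\tfrac{m}{|A_i|}\tr_{A_iA_oB_i}J_\Pi$. Then $\tr V_B=m^2$ follows from TP alone, and the operator inequality follows from $\ketbra{\Phi}{\Phi}_{A_oB_i}\le m\,\1_{A_oB_i}$ (giving $J_{\widetilde\cN}\le m\tr_{A_oB_i}J_\Pi$) together with A-to-B no-signalling (to factor $\tr_{A_oB_i}J_\Pi$ as $\tfrac{1}{|A_i|}\1_{A_i}\otimes\tr_{A_iA_oB_i}J_\Pi$). The PPT bound then follows analogously from $-\1\le F_{A_oB_i}\le\1$ applied to $J_\Pi^{T_{B_iB_o}}\ge 0$. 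So your strategy can be salvaged, but you have misidentified where the difficulty lies: the sufficient direction is routine verification of the ansatz, whereas the necessary direction needs the right $V_B$ and the inequality $\ketbra{\Phi}{\Phi}\le m\1$ that you do not mention. (A minor additional point: $\tr_B J_{\widetilde\cN}=\1_A$ does not follow from CP and TP of $J_\Pi$ alone---it needs B-to-A no-signalling.)
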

\begin{proof}
Denote the Choi-Jamio\l{}kowski matrix of $\id_m$ as $J_{m} = \sum_{i,j=0}^{m-1}\ket{ii}\bra{jj}_{A_oB_i}$. Then $J_m^T = J_m$ and the SDP for the minimum error $\omega_{\rm{NS}}(\id_m,\cN)$ can be restated as  
\begin{subequations}\label{re SDP S}
	\begin{align}
	\inf & \quad \ \gamma\\
	\text{\rm s.t.} &\quad  \tr_{B_o} Y_{A_iB_o} \leq \gamma \1_{A_i},\\
	& \quad \ Y_{A_iB_o}\ge  J_{\widetilde{\cN}}-J_\cN,\ Y_{A_iB_o} \geq 0, \\
	& \quad \ J_{\widetilde{\cN}} = \tr_{A_oB_i} (J_{m}\ox\1_{A_iB_o})J_{\Pi}, \label{re condi 1}\\
	& \quad  \ J_{\Pi}\ge 0,\ \tr_{A_oB_o} J_{\Pi} = \1_{A_iB_i}, \hspace{1.55cm} \mathrm{(CP,TP)}\label{re CPTP}\\
	&\quad \, \tr_{A_o}J_{\Pi}=\1_{A_i}/|A_i|\ox\tr_{A_oA_i}J_{\Pi},\label{re A not to B} \hspace{0.8cm} (\mathrm{A} \not \to \mathrm{B})\\
	& \quad \, \tr_{B_o}J_{\Pi}=\1_{B_i}/|B_i|\ox\tr_{B_iB_o}J_{\Pi},\hspace{0.8cm} (\mathrm{B} \not \to \mathrm{A})\label{re B not to A}		
	\end{align}
\end{subequations}
where the dimensions of $A_o$ and $B_i$ are equal to $m$.

The main idea to do the simplification is to utilize the symmetry of the noiseless quantum channel, i.e., the invariance of $J_m$ under any local unitary $U_{A_o}\ox \overline{U}_{B_i}$. Suppose $J_{\Pi}$ is optimal for SDP~\eqref{re SDP S}, we can check that $(U_{A_o} \ox \overline{U}_{B_i}) J_{\Pi} (U_{A_o} \ox \overline{U}_{B_i})^\dagger$ is also optimal. Any convex combination of optimal solutions remains optimal. Thus, without loss of generality we can take~\cite{Rains2001,Leung2015c},
	\begin{align}\label{special form Z}
	J_{\Pi}  = & \int dU (U_{A_o}\ox \overline{U}_{B_i})J_{\Pi}(U_{A_o}\ox \overline{U}_{B_i})^\dagger =\frac{J_m}{m} \ox H_{A_i B_o} + (\1_{A_oB_i} - \frac{J_m}{m}) \ox K_{A_iB_o},
	\end{align}
	where the integral is taken over the Haar measure and $H$, $K$ are operators on system $A_iB_o$. 

	Note that $J_m \cdot J_m = mJ_m$. Taking Eq.~\eqref{special form Z} into the condition~\eqref{re condi 1}, we obtain that
	\begin{align}
		\text{Condition}~\eqref{re condi 1} \iff J_{\widetilde \cN} = m H.
	\end{align}
	Taking Eq.~\eqref{special form Z} into the condition~\eqref{re CPTP}, we have the equivalence
	\begin{align}
	 	\text{Condition}~\eqref{re CPTP} \iff H\geq 0,\ K \geq 0,\ \text{and}\ \tr_{B_o} (H +(m^2-1) K) = m \1_{A_i}.
	 \end{align} 
	Since $J_{\widetilde \cN}$ is the Choi-Jamio\l{}kowski matrix of the effective channel, we have $\tr_{B_o} J_{\widetilde \cN} = \tr_{B_o} m H = \1_{A_i}$. Combining $\tr_{B_o} (H +(m^2-1) K) = m \1_{A_i}$, we have $\tr_{B_o} m K = \1_{A_i}$. This implies that the condition~\eqref{re B not to A} is trivial and 
	\begin{align}		 
		\text{Condition}~\eqref{re A not to B} & \iff H + (m^2-1)K= \1_{A_i}/|A_i| \ox \tr_{A_i}(H+ (m^2-1)K).
	\end{align} 
	So far we have the simplified SDP as
	\begin{subequations}
	\begin{align}
	\inf & \quad \ \gamma\\
	\text{\rm s.t.} &\quad  \tr_{B_o} Y_{A_iB_o} \leq \gamma \1_{A_i},\\
	& \quad \ Y_{A_iB_o}\ge  J_{\widetilde{\cN}}-J_\cN,\ Y_{A_iB_o} \geq 0, \\
	& \quad \ J_{\widetilde{\cN}} = m H_{A_iB_o},\label{cora tmp1}\\
	&\quad\  \tr_{B_o}  J_{\widetilde{\cN}} = \1_{A_i} \\
	& \quad  \  H\geq 0,\ K \geq 0,\ \tr_{B_o} m K = \1_{A_i},\\
	& \quad \ H + (m^2-1)K= \1_{A_i}/|A_i| \ox \tr_{A_i}(H+ (m^2-1)K) \label{cora tmp2}
	\end{align}
\end{subequations}
	Denote $V_{B_o} = m\tr_{A_i}(H + (m^2-1)K)/|A_i|$. By conditions~\eqref{cora tmp2}, we have 
	\begin{align}\label{sdp condition k}
		(m^2-1)mK = \1_{A_i} \ox V_{B_o} - mH.
	\end{align} Together with the condition~\eqref{cora tmp1}, we can eliminate the variables $K$ and $H$ in the above SDP. Finally, replacing the subscript $A_i$ to $A$ and $B_o$ to $B$, we have the desired SDP~\eqref{id error SDP}.

	As for the PPT condition $J_{\Pi}^{T_{B_iB_o}} \geq 0$, we have ${J_m^{T_{B_i}}}/{m}\ox H^{T_{B_o}} + (\1 - {J_m^{T_{B_i}}}/{m}) \ox K^{T_{B_o}} \geq 0$ from Eq.~\eqref{special form Z}. Note that $J_m^{T_{B_i}}$ is the swap operator and we can decompose it into the sum of two orthogonal positive parts, i.e., $J_m^{T_{B_i}} = J_+ - J_-$ where $J_+ \geq 0, J_- \geq 0$ and $J_+ + J_- = \1$. Then the PPT condition is equivalent to $J_+ \ox [H^{T_{B_o}} + (m-1) K^{T_{B_o}}] + J_- \ox [-H^{T_{B_o}} + (m+1) K^{T_{B_o}}] \geq 0$. Thus $-(m-1)K^{T_{B_o}} \leq H^{T_{B_o}} \leq (m+1)K^{T_{B_o}}$. Combining with Eqs.~\eqref{cora tmp1} and~\eqref{sdp condition k}, we have $-\1_{A_i}\otimes V_{B_o}^T \leq m J_{\widetilde{\cN}}^{T_{B_o}} \leq \1_{A_i}\otimes V_{B_o}^T$.

\end{proof}

From the definition of one-shot quantum simulation cost and Corollary~\ref{min error simulation}, we have the following result.
\begin{proposition}\label{one shot simulation cost SDP}
	For any quantum channel $\cN$ and error tolerance $\ve\geq 0$, the one-shot $\ve$-error quantum simulation cost under NS-assisted codes is given by the following SDP,
	\begin{subequations}
		\label{SNS SDP}
		\begin{align}
		S^{(1)}_{\rm NS,\varepsilon}(\cN) = \log \ \inf &\quad \Big\lceil\sqrt{\tr V_{B}}\Big\rceil\\
		\text{\rm s.t.} &\quad \tr_{B} Y_{AB} \leq \ve \1_{A},\\
		&\ \quad Y_{AB} \geq J_{\widetilde{\cN}} - J_{\cN},\ Y_{AB} \ge 0,\\
		&\ \quad J_{\widetilde{\cN}} \geq 0, \ \tr_{B} J_{\widetilde{\cN}} = \1_{A}, \label{J widehat channel}\\
		&\ \quad J_{\widetilde{\cN}} \leq \1_{A}\otimes V_{B}.
		\end{align}
	\end{subequations}
\end{proposition}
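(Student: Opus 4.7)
The plan is to start from the definition $S^{(1)}_{\rm NS,\ve}(\cN) = \log \inf\{m \in \mathbb N : \omega_{\rm NS}(\id_m,\cN) \leq \ve\}$ and plug in the SDP for $\omega_{\rm NS}(\id_m,\cN)$ from Corollary~\ref{min error simulation}. This turns the problem into a joint minimization over the integer dimension $m$ and the SDP variables $Y_{AB}$, $J_{\widetilde\cN}$, $V_B$, subject to $\tr_B Y_{AB} \leq \ve\1_A$, the Choi-matrix constraints on $J_{\widetilde\cN}$, the domination $J_{\widetilde\cN} \leq \1_A \ox V_B$, and the dimension-matching equality $\tr V_B = m^2$. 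The goal is to eliminate the discrete variable $m$ in favor of the continuous quantity $\sqrt{\tr V_B}$, with a ceiling to restore integrality.

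The key step is a monotonicity/scaling observation. Suppose $(Y_{AB}, J_{\widetilde\cN}, V_B)$ is feasible at dimension $m$, so in particular $\tr V_B = m^2$. Then for any integer $m' \geq m$, the triple $(Y_{AB}, J_{\widetilde\cN}, V_B')$ with $V_B' := (m'/m)^2 V_B$ is feasible at dimension $m'$: one has $\tr V_B' = (m')^2$, while $V_B' \geq V_B$ implies $J_{\widetilde\cN} \leq \1_A \ox V_B \leq \1_A \ox V_B'$, and all remaining constraints are independent of $m$. Hence the set of feasible dimensions is upward-closed in $\mathbb N$, and the smallest feasible integer dimension equals $\lceil t^*\rceil$, where
\begin{align}
t^* := \inf\bigl\{\sqrt{\tr V_B} \;\big|\; Y_{AB}\ge 0,\ \tr_B Y_{AB} \leq \ve\1_A,\ Y_{AB} \ge J_{\widetilde\cN} - J_\cN,\ J_{\widetilde\cN} \ge 0,\ \tr_B J_{\widetilde\cN} = \1_A,\ J_{\widetilde\cN} \le \1_A \ox V_B\bigr\}
\end{align}
is the value of the relaxed problem in which $\tr V_B$ is an optimization variable rather than fixed to $m^2$.

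Since this relaxed SDP is feasible (take $J_{\widetilde\cN} = J_\cN$, $Y_{AB}=0$, $V_B = \tr_A J_\cN$) and bounded below by $0$, its infimum is attained, so the non-decreasing map $t \mapsto \lceil t\rceil$ commutes with the infimum and we may write $\lceil t^*\rceil = \inf \lceil \sqrt{\tr V_B}\rceil$ over the same feasible set. Taking the logarithm gives exactly the SDP~\eqref{SNS SDP}. I expect the only genuinely non-routine piece to be the scaling argument and the justification for pulling the ceiling inside the infimum; after that the calculation is bookkeeping.
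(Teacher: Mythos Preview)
Your approach is correct and is precisely what the paper does: it states the result as an immediate consequence of the definition~\eqref{one shot cost definition} together with Corollary~\ref{min error simulation}, and you have simply made explicit the upward-closure (scaling $V_B \mapsto \lambda V_B$, $\lambda \geq 1$) and ceiling arguments that the paper leaves to the reader. One small slip in an aside: your feasibility witness $V_B = \tr_A J_\cN$ does not in general satisfy $J_\cN \leq \1_A \ox \tr_A J_\cN$ (take $\cN = \id_d$, where $J_\cN$ has an eigenvalue $d$ but $\1_A \ox \tr_A J_\cN = \1_{AB}$); any choice such as $V_B = \|J_\cN\|_\infty\,\1_B$ fixes this without affecting the argument.
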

It is easy to check that $\delta=\log \lceil x \rceil - \log x \in [0,1]$ for any $x \geq 1$. Thus we can use the least constant $\delta \in [0,1]$ to adjust the r.h.s. of Eq.~\eqref{SNS SDP} to be the logarithm of an integer. That is,
	\begin{subequations}	
		\label{SNS SDP1}
		\begin{align}
		S^{(1)}_{\rm NS,\varepsilon}(\cN) = \delta \ + \ \frac12 \log \ \inf &\quad \tr V_{B}\\
		\text{\rm s.t.} &\quad \tr_{B} Y_{AB} \leq \ve \1_{A},\\
		&\ \quad Y_{AB} \geq J_{\widetilde{\cN}} - J_{\cN},\ Y_{AB} \ge 0,\\
		&\ \quad J_{\widetilde{\cN}} \geq 0, \ \tr_{B} J_{\widetilde{\cN}} = \1_{A}, \label{J widehat channel}\\
		&\ \quad J_{\widetilde{\cN}} \leq \1_{A}\otimes V_{B}.
		\end{align}
	\end{subequations}
Note that the one-shot quantum simulation cost under NS$\cap$PPT-assisted codes is not an SDP, since the objective function $m$ appears in the conditions $\tr V_{B} = m^2$ and $-\1_{A}\otimes V_{B}^T \leq m J_{\widetilde{\cN}}^{T_B} \leq \1_{A}\otimes V_{B}^T$ with different powers. We do not see a way to obtain a linear objective function.

It is also worth mentioning that the zero-error quantum simulation cost was studied by Duan and Winter in~\cite{Duan2016}. The authors show that the zero-error NS-assisted simulation cost is given by the conditional min-entropy of the channel's Choi-Jamio\l{}kowski matrix~\cite[Theorem 2]{Duan2016}. The result we obtained in Proposition~\ref{one shot simulation cost SDP} is more general and can reduce to the zero-error case by letting $\ve = 0$. More specifically, taking $\ve = 0$ will lead to $Y_{AB} = 0$ and thus $J_{\widetilde \cN} = J_{\cN}$. Then we have

\begin{align}	\label{zero error conditional min entropy}
S_{\rm{NS},0}^{(1)}(\cN)
=\frac{1}{2} \log\ \inf \left\{\,\tr V_{B} | \ J_{\cN}\le \1_{A}\ox V_{B}\right\} + \delta,
\end{align}	
where $\delta \in [0,1]$ is the least constant such that the r.h.s. is the logarithm of an integer.
Since the conditional min-entropy is additive (see~\cite{Tomamichel2012}), we have 
\begin{align}
	S_{\rm{NS},0}(\cN) :=& \lim_{n\to \infty} \frac1n S_{\rm NS,0}^{(1)}(\cN^{\ox n})=\frac{1}{2} \log\ \inf \left\{\,\tr V_{B} | \ J_{\cN}\le \1_{A}\ox V_{B}\right\}.
\end{align}

\begin{remark}
	In the next section, we will see that the zero-error NS-assisted simulation cost could also be considered to be the max-information of the channel's \Choi state based on Eqs.~\eqref{channel max information} and~\eqref{cost and max information zero error}. 
\end{remark}

%%%%%%%%%%%%%%%%%%%%%%%%%%%%%%%%%%%%%%%%%%%%%%%%%%%%%%%%%%%%%%%%%%%

\section{The channel's max-information and its asymptotic equipartition property}\label{max information part}

In this section, we introduce a novel entropy called the channel's smooth max-information and show that it has an operational interpretation regarding the quantum simulation cost of a channel. Furthermore, we prove the asymptotic equipartition property (AEP) of the channel's smooth max-information and explore its close relation to the well-known quantum reverse Shannon theorem (QRST).

Some basic notations will be used in this section. The set of sub-normalized quantum states is denoted as $S_\leq(A):=\{\, \rho \geq 0 \ |\ \tr \rho \leq 1\,\}$. The set of quantum states is denoted as $S_=(A):=\{\,\rho \geq 0\ |\ \tr \rho = 1\,\}$. We denote $\rho_A$ as the reduced state of $\rho_{AB}$, i.e. $\rho_A := \tr_B \rho_{AB}$. The purified distance based on the generalized fidelity is given by~\cite{Tomamichel2012}
\begin{align}
P(\rho,\sigma) := \sqrt{1- F^2(\rho,\sigma)} \ \ \ \text{with} \ \ \ F(\rho,\sigma) := \|\rho^{1/2} \sigma^{1/2}\|_1 + \sqrt{(1-\tr \rho)(1-\tr \sigma)}.
\end{align}
We say $\rho$ and $\sigma$ are $\ve$-close and write $\rho \approx^{\ve} \sigma$ if and only if $P(\rho,\sigma) \leq \ve$.

\subsection{The channel's max-information}

The max-relative entropy of $\rho \in \cS_\leq(A)$ with respect to $\sigma \geq 0$ is defined as~\cite{Datta2009,Renner2005}
\begin{align}\label{max relative entropy}
D_{\max}(\rho\|\sigma) :=  \inf \big\{\,t \ |\ \rho \leq 2^t \cdot \sigma\,\big\},
\end{align}
which is a one-shot generalization of the equantum relative entropy
\begin{align}
	D(\rho\|\sigma):= \tr \rho(\log \rho -\log \sigma)
\end{align}
if $\supp(\rho) \subseteq \supp(\sigma)$ and $+\infty$ otherwise.
The max-information that $B$ has about $A$ for $\rho_{AB}\in \cS_\leq({AB})$ is defined as~\cite{Berta2011a}
\begin{align}\label{max information}
I_{\max}(A:B)_\rho := \inf_{\sigma_B \in \cS_=(B)}D_{\max}(\rho_{AB}\| \rho_A \ox \sigma_B),
\end{align}
which is a one-shot generalization of the quantum mutual information
\begin{align}
I(A:B)_\rho:= \inf_{\sigma_B\in \cS_=(B)} D(\rho_{AB}\|\rho_A \ox \sigma_B).
\end{align}

\begin{definition}
For any quantum channel $\cN_{A'\to B}$ we define the channel's max-information of $\cN$ as 
\begin{align}\label{channel max information}
I_{\max}(A:B)_\cN := I_{\max}(A:B)_{\cN_{A'\to B}(\Phi_{AA'})},
\end{align}
where $\Phi_{AA'} =\frac{1}{|A|}\sum_{i,j=0}^{|A|-1} \ket{i_Ai_{A'}}\bra{j_Aj_{A'}}$ is the maximally entangled state on $AA'$.	
\end{definition} 
\begin{remark}\label{rem: Imax input}
The following argument shows that this definition does not depend on the input state $\Phi_{AA'}$. That is, for any full rank state $\phi_{A'}$ with a purification $\phi_{AA'}=|A|\sqrt{\phi_A} \Phi_{AA'} \sqrt{\phi_A}$, we have
\begin{align}\label{Imax equality}
	I_{\max}(A:B)_\cN = I_{\max}(A:B)_{\cN_{A'\to B}(\phi_{AA'})}.
\end{align}
From the definitions~\eqref{max relative entropy},~\eqref{max information} and~\eqref{channel max information}, we have 
\begin{align}\label{I max definition eq 2}
I_{\max}(A:B)_{\cN} =  \inf \left\{\ t \ \Big| \
 \ \cN_{A'\to B}(\Phi_{AA'}) \leq 2^t \cdot \frac{\1_A}{|A|} \ox \sigma_B,\ \sigma_B \in \cS_=(B)\right\}.
\end{align}
Since 
\begin{align}
\cN_{A'\to B}(\phi_{AA'}) =|A| \cdot \cN_{A'\to B}(\sqrt{\phi_A}\,\Phi_{AA'}\sqrt{\phi_A}) = |A| \cdot\sqrt{\phi_A}\,\cN_{A'\to B}(\Phi_{AA'})\sqrt{\phi_A},
\end{align}
the first condition in~\eqref{I max definition eq 2} is equivalent to $\cN_{A'\to B}(\phi_{AA'}) \leq 2^t \cdot \phi_A \ox \sigma_B$, which implies Eq.~\eqref{Imax equality}.
\end{remark}

\begin{remark}
	Note that the sandwiched R\'{e}nyi version of the channel's mutual information was previously defined in~\cite{Gupta2013,Cooney2014}. Our definition of the channel's max-information is compatible with the more general one and can be recovered in the limit as $\a \to \infty$~\cite{Gour2018}. 
\end{remark}

\vspace{0.2cm}
Comparing Eqs.~\eqref{zero error conditional min entropy} and~\eqref{I max definition eq 2}, we can write the one-shot zero-error quantum simulation cost as the channel's max-information:
\begin{align}\label{cost and max information zero error}
S_{\rm{NS},0}^{(1)}(\cN)=\frac12 I_{\max}(A:B)_\cN + \delta,
\end{align}
where $\delta \in [0,1]$ is the least constant such that the r.h.s. is the logarithm of an integer.
In the following, we show this relation beyond the zero-error case. For this, we define the smoothed version of the channel's max-information.
\begin{definition}
For any quantum channel $\cN$, we define the channel's smooth max-information as 
\begin{align}
I_{\max}^\ve(A:B)_\cN := \inf_{\substack{\frac12\|\widetilde \cN - \cN\|_\di \leq \ve\\ \widetilde \cN \in\ \text{\rm CPTP}(A':B)}}  \quad I_{\max}(A:B)_{\widetilde \cN},
\end{align}	
where $\text{\rm CPTP}(A':B)$ denotes the set of all CPTP maps from $A'$ to $B$. 	
\end{definition}
We show that the one-shot $\ve$-error quantum simulation cost is completely characterized by the channel's smooth max-information. This provides the operational meaning of this new entropy.

\begin{theorem}\label{simulation cost I max}
	For any quantum channel $\cN$ and given error tolerance $\ve \geq 0$, it holds that
	\begin{align}\label{one shot cost I max}
	S^{(1)}_{\rm{NS},\varepsilon}(\cN) = \frac12I_{\max}^\ve(A:B)_\cN + \delta,
	\end{align}
	where $\delta \in [0,1]$ is the least constant such that the r.h.s. is the logarithm of an integer.
\end{theorem}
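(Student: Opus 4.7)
The plan is to derive the identity directly by combining the SDP characterization of the one-shot simulation cost from Proposition~\ref{one shot simulation cost SDP} with the $V_B$-formulation of the max-information, and then account for the ceiling in the objective as the constant $\delta$. I would first unpack the constraints of SDP~\eqref{SNS SDP} and recognize that the conditions $J_{\widetilde{\cN}} \geq 0$, $\tr_B J_{\widetilde{\cN}} = \1_A$ together with $\tr_B Y_{AB} \leq \ve \1_A$, $Y_{AB} \geq J_{\widetilde{\cN}} - J_{\cN}$, $Y_{AB} \geq 0$ are exactly (via Watrous's diamond norm SDP~\eqref{SDP diamond}) the requirement that $\widetilde{\cN}$ is a CPTP map with $\frac{1}{2}\|\widetilde{\cN} - \cN\|_\di \leq \ve$. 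Thus the SDP can be recast as a nested infimum over such $\widetilde{\cN}$ and over $V_B \geq 0$ with $J_{\widetilde{\cN}} \leq \1_A \otimes V_B$.

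Second, for each fixed $\widetilde{\cN}$, I would evaluate the inner optimization
\[
\inf \big\{ \tr V_B \, \big| \, V_B \geq 0, \, J_{\widetilde{\cN}} \leq \1_A \otimes V_B \big\}.
\]
Parametrizing $V_B = 2^t \sigma_B$ with $\sigma_B \in \cS_=(B)$ and $t \in \RR$, the constraint becomes $J_{\widetilde{\cN}} \leq 2^t \, \1_A \otimes \sigma_B$, which by Remark~\ref{rem: Imax input} (applied with the maximally mixed input so that the reduced state is $\1_A/|A|$) is equivalent to $\cN_{A'\to B}(\Phi_{AA'}) \leq 2^t \cdot \1_A/|A| \otimes \sigma_B$. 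By the definition~\eqref{I max definition eq 2} of $I_{\max}(A:B)_{\widetilde{\cN}}$, the infimal $t$ is $I_{\max}(A:B)_{\widetilde{\cN}}$, and hence the infimum of $\tr V_B$ equals $2^{I_{\max}(A:B)_{\widetilde{\cN}}}$.

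Third, substituting this back, the outer infimum over $\widetilde{\cN} \approx^\ve_\di \cN$ of $2^{I_{\max}(A:B)_{\widetilde{\cN}}}$ is precisely $2^{I_{\max}^\ve(A:B)_\cN}$ by definition of the smooth max-information. The SDP objective $\lceil \sqrt{\tr V_B}\,\rceil$ therefore minimizes to $\lceil 2^{\frac{1}{2} I_{\max}^\ve(A:B)_\cN} \rceil$. Taking logarithm gives
\[
S^{(1)}_{\rm NS,\ve}(\cN) = \log \bigl\lceil 2^{\frac{1}{2} I_{\max}^\ve(A:B)_\cN} \bigr\rceil = \frac{1}{2} I_{\max}^\ve(A:B)_\cN + \delta,
\]
where $\delta = \log \lceil x \rceil - \log x$ with $x = 2^{\frac{1}{2} I_{\max}^\ve(A:B)_\cN} \geq 1$, so $\delta \in [0,1]$ as noted in the text between~\eqref{SNS SDP} and~\eqref{SNS SDP1}.

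The only subtle step is the infimum-ceiling interchange, i.e.\ that $\inf_{\widetilde{\cN}} \lceil a_{\widetilde{\cN}} \rceil = \lceil \inf_{\widetilde{\cN}} a_{\widetilde{\cN}} \rceil$ with $a_{\widetilde{\cN}} = 2^{I_{\max}(A:B)_{\widetilde{\cN}}/2}$; this is justified because $\lceil \cdot \rceil$ is monotone non-decreasing and right-continuous, so any feasible $\widetilde{\cN}$ approaching the infimum of $a_{\widetilde{\cN}}$ also approaches $\lceil \inf a_{\widetilde{\cN}} \rceil$ from above (and we may take a limit). Beyond this, the proof is a direct assembly of Proposition~\ref{one shot simulation cost SDP}, Remark~\ref{rem: Imax input}, and the definition of $I_{\max}^\ve(A:B)_\cN$; I do not anticipate any serious technical obstacle.
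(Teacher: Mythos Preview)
Your proposal is correct and follows essentially the same route as the paper's proof: identify the constraints $J_{\widetilde{\cN}}\ge 0$, $\tr_B J_{\widetilde{\cN}}=\1_A$, $Y\ge 0$, $Y\ge J_{\widetilde{\cN}}-J_{\cN}$, $\tr_B Y\le \ve\1_A$ with the condition that $\widetilde{\cN}$ is CPTP and $\frac12\|\widetilde{\cN}-\cN\|_\di\le\ve$ via~\eqref{SDP diamond}, then recognize the remaining inner optimization $\inf\{\tr V_B : J_{\widetilde{\cN}}\le \1_A\ox V_B\}$ as $2^{I_{\max}(A:B)_{\widetilde{\cN}}}$ by~\eqref{I max definition eq 2}, and finally pass to the smooth quantity and absorb the ceiling into $\delta$. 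One small correction: your justification of the infimum--ceiling interchange via ``right-continuity'' is not valid, since $\lceil\cdot\rceil$ is \emph{not} right-continuous at integers; the interchange holds instead because the SDP infimum is actually attained (the feasible set, once $\tr V_B$ is bounded by any feasible value, is compact), a point the paper also passes over silently.
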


\begin{proof}
	We first notice that the constraints in Eq.~\eqref{J widehat channel} $J_{\widetilde{\cN}} \geq 0, \ \tr_{B} J_{\widetilde{\cN}} = \1_{A}$  uniquely define a CPTP map $\widetilde \cN$ due to the Choi-Jamio\l{}kowski isomorphism. Applying the SDP of diamond norm in~\eqref{SDP diamond}, we find 
	\begin{subequations}
		\label{one shot SC NS refine}
		\begin{align}
		S^{(1)}_{\rm{NS},\varepsilon}(\cN) = \delta\ + \  \frac{1}{2}\log \ \inf & \quad \tr V_B \\
		\text{\rm s.t.} &\quad \frac{_1}{^2}\|\widetilde \cN - \cN\|_\di \leq \ve,\\
		& \quad \widetilde \cN \in \text{CPTP}(A': B),\\
		&\quad J_{\widetilde \cN} \leq  \1_{A}\otimes V_B.
		\end{align}
	\end{subequations}
	From Eqs.~\eqref{I max definition eq 2}, we know that
\begin{align}\label{max information SDP}
	I_{\max}(A:B)_\cN = \log \inf \big\{\,\tr V_{B}\ |\ J_{\cN}\le \1_{A}\ox V_{B}\,\big\}.
	\end{align}
Combining Eqs.~\eqref{one shot SC NS refine} and \eqref{max information SDP}, we obtain the desired result.
\end{proof}

\begin{remark}
	Note that an alternative way to write Eq.~\eqref{one shot cost I max} is $S^{(1)}_{\rm{NS},\varepsilon}(\cN) = \log \Big\lceil \sqrt{2^{I_{\max}^\ve(A:B)_\cN}}\Big\rceil$.
\end{remark}

\vspace{0.1cm}
\begin{remark}\label{rem: Imax data-processing}
	Since $I_{\max}^\ve(A:B)_\cN$ is introduced as an entropy of the channel $\cN$, it is natural to consider its data-precessing inequality, i.e., $I_{\max}^\ve(A:B)_{\Theta(\cN)} \leq I_{\max}^\ve(A:B)_{\cN}$ for any quantum superchannel $\Theta$~\cite{Chiribella2008}. By the operational meaning above, this is equivalent to $S^{(1)}_{\rm{NS},\varepsilon}(\Theta(\cN)) \leq S^{(1)}_{\rm{NS},\varepsilon}(\cN)$ which can be understood as we need less resources to simulate a quantum
	channel with higher noise. Specifically, this relation can be checked by the definition of quantum channel simulation cost. Suppose the simulation cost of $\cN$ is given by $\log m$. This implies that there exists an NS-assisted code $\Pi$ such that $\frac12 \|\Pi\circ \id_m - \cN\|_\di \leq \ve$. For any quantum superchannel $\Theta$, it can be written~\cite{Chiribella2008} as $\Theta(\cN) =  \cR_{B_o E\to \bar B} \circ (\cN_{A_i \to B_o}\ox \id_E)\circ \cF_{\bar A \to A_i E}$ with pre-processing channel $\cF_{\bar A \to A_i E}$ and post-processing channel $\cR_{B_o E\to \bar B}$. Note that the diamond norm is multiplicative under tensor product, sub-multiplicative under composition and equals to one for any quantum channels~\cite{Kitaev1997}. Then it holds $\frac12 \|\Theta(\Pi\circ \id_m) - \Theta(\cN)\|_\di \leq \frac12 \|\Pi\circ \id_m - \cN\|_\di \leq \ve$. This shows that we can use the noiseless channel $\id_m$ to simulate $\Theta(\cN)$ under the NS-assisted code $\Theta\circ \Pi$ within $\ve$ error. By definition, we have $S^{(1)}_{\rm{NS},\varepsilon}(\Theta(\cN)) \leq \log m = S^{(1)}_{\rm{NS},\varepsilon}(\cN)$.

\end{remark}

\begin{remark}
	Note that the \Choi matrix of a constant channel $\cM(\rho)=\sigma$, $\forall \rho$ is given by $J_{\cM} = \1_A \ox \sigma_B$. Thus from the perspective of quantum resource theory, the channel's smooth max-information can be written as the ``distance'' between the given channel $\cN$ and the set of constant channels 
	\begin{align}
   \boldsymbol\cG := \big\{\,\cM \in \text{CPTP}(A:B)\ \big|\ \exists \ \sigma \ \text{\rm s.t.}\  \cM(\rho) = \sigma, \forall\, \rho \,\big\}.
\end{align}
More specifically, for any quantum channel $\cN$, we have 
\begin{align}\label{distance characterization}
 	I_{\max}^\ve(A:B)_\cN = \min_{\cM \in \boldsymbol \cG} D^\ve_{\max}(\cN\|\cM),
 \end{align}
 where 
 \begin{align}
 	D^\ve_{\max}(\cN\|\cM):= \inf_{\substack{\frac12\|\widetilde \cN - \cN\|_\di \leq \ve\\ \widetilde \cN \in \text{CPTP}(A':B)}} D_{\max}\big(\widetilde \cN\|\cM\big)
 \end{align} 
 and $D_{\max}(\cN\|\cM):= D_{\max}(J_{\cN}\|J_{\cM})$ with \Choi matrices $J_{\cN}$, $J_{\cM}$. Since the max-relative entropy is closely related with the robustness~\cite{Datta2009b}---the minimal mixing required to make the given resource useless, we can also define the channel's analogy of robustness as (see also \cite{Diaz2018})
 \begin{align}
   \cR_g(\cN):=\inf\left\{t \geq 0 \ \bigg|\ \exists \ \cM \in \text{CPTP}(A:B)\ \text{\rm s.t.}\ \frac{\cN+t\cM}{1+t} \in \boldsymbol\cG \right\},
 \end{align}
 and its smoothed version 
 \begin{align}
 	R^\ve_g(\cN):= \inf_{\substack{\frac12\|\widetilde \cN - \cN\|_\di \leq \ve\\ \widetilde \cN \in \text{CPTP}(A':B)}} R_g\big(\widetilde \cN\big).
 \end{align}
 Then the channel's smooth max-information can be written as
 \begin{align}
 	I_{\max}^\ve(A:B)_{\cN} = \log (1+\cR_g^\ve(\cN)).
 \end{align}
\end{remark}

\subsection{Asymptotic equipartition property and the quantum reverse Shannon theorem}

In the framework of quantum channel simulation, the quantum  capacity is given by the optimal rate of using $\cN$ to simulate the qubit noiseless channel $\id_2$, while the channel simulation cost is given by the optimal rate of using $\id_2$ to simulate the channel $\cN$. Thus, it operationally holds  that
\begin{align} \label{equality chain}
Q_{\rm E}(\cN) \leq Q_{\rm{NS}}(\cN) \leq S_{\rm{NS}}(\cN) \leq S_{\rm E}(\cN),
\end{align}
where the above four notations represent entanglement-assisted quantum capacity, NS-assisted quantum capacity, NS-assisted quantum simulation cost and entanglement-assisted quantum simulation cost, respectively. The quantum reverse Shannon theorem~\cite{Bennett2014,Berta2011a} states that the quantum simulation cost is equal to its quantum capacity under entanglement-assistance, i.e., $Q_{\rm E}(\cN) = S_{\rm E}(\cN)$. In the following, the quantum reverse Shannon theorem under NS-assistance means that $Q_{\rm{NS}}(\cN) = S_{\rm{NS}}(\cN)$. 

The AEP of the channel's smooth max-information is the claim that
\begin{align}\label{AEP equation}
\lim_{\ve \to 0}\lim_{n\to \infty} \frac{1}{n}  I_{\max}^\ve(A:B)_{\cN^{\ox n}} =  I(A:B)_{\cN},
\end{align}
where 
\begin{align}
I(A:B)_{\cN} := \max_{\rho_A\in S_=(A)} I(A:B)_{\cN_{A'\to B}(\phi_{AA'})}
\end{align} is the mutual information of the quantum channel and $\phi_{AA'}$ is a purification of the state $\rho_A$.
Based on the operational interpretation in Theorem~\ref{simulation cost I max}, the definition of asymptotic simulation cost in Eq.~\eqref{eq: definition asymp cost} and the known result that $Q_{\rm E}(\cN) = \frac12I(A:B)_{\cN}$~\cite{Bennett2002}, we have 
\begin{align}
\rm{AEP}~\eqref{AEP equation} \iff Q_E(\cN) = S_{\rm NS}(\cN).
\end{align}
Hence the QRST implies the AEP for the channel's smooth max-information. On the other hand, we can directly prove the AEP, as is done in Theorem~\ref{AEP directly}. This proof then implies the QRST in the presence of NS correlations.

In the following, we utilize the smooth max-information of a quantum state and its variation:
\begin{align}
I_{\max}^\ve(A:B)_\rho := &\inf_{\widetilde \rho \approx^\ve \rho} I_{\max}(A:B)_{\widetilde \rho},\label{eq:IMAX}\\
\widehat I_{\max}^\ve(A:B)_\rho := &\inf_{\substack{\widetilde \rho \approx^\ve \rho\\ \widetilde \rho_A = \rho_A}} I_{\max}(A:B)_{\widetilde \rho}.\label{eq:I max variation}
\end{align}
The first smooth max-information is most often used in the literature~\cite{Berta2011a,Ciganovic2014}. The second variation naturally appears in our discussion of the channel simulation problem. The restricted smoothing such that the marginal state is fixed comes from the definition of diamond norm where the reference system of the input state is untouched. Using ideas from~\cite{Anshu2018a}, the following lemma shows that these two quantities are equivalent up to some correction terms. The proof can be found in Appendix~\ref{tech lemmas}.

\begin{lemma}
For any quantum state $\rho_{AB}$ and $\ve \in (0,1)$, it holds
\begin{align}\label{eq:I max lemma}
\widehat I_{\max}^{\ve}(A:B)_\rho \leq I_{\max}^{\ve/6}(A:B)_\rho + g(\ve)\quad\text{with}\quad g(\ve) = \log(1+72/\ve^2).
\end{align}
\end{lemma}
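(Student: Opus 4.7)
The plan is to start from an optimizer $\tilde\rho_{AB}$ of the right-hand side and convert it into a nearby state with the correct marginal $\rho_A$, paying at most $g(\ve)$ in the max-information bound. Concretely, I would pick $\tilde\rho_{AB}$ achieving $I_{\max}^{\ve/6}(A:B)_\rho =: k$, so that $P(\tilde\rho, \rho) \leq \ve/6$ and there exists a witness $\sigma_B \in \cS_=(B)$ with $\tilde\rho_{AB} \leq 2^k\, \tilde\rho_A \ox \sigma_B$. Data processing of the purified distance under the partial trace gives $P(\tilde\rho_A, \rho_A) \leq \ve/6$, so the marginals are already close, and the goal is to produce $\hat\rho_{AB}$ with $\hat\rho_A = \rho_A$ exactly, $P(\hat\rho_{AB}, \rho_{AB}) \leq \ve$, and $I_{\max}(A:B)_{\hat\rho} \leq k + g(\ve)$.

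The natural candidate is a ``tilting'' correction on the $A$ system: define $X_A := \rho_A^{1/2} \tilde\rho_A^{-1/2}$ on $\supp(\tilde\rho_A)$ (regularizing $\tilde\rho_A$ by a tiny admixture of $\rho_A$ to guarantee $\supp(\rho_A) \subseteq \supp(\tilde\rho_A)$) and set $\hat\rho_{AB} := (X_A \ox \1_B)\, \tilde\rho_{AB}\, (X_A^\dagger \ox \1_B)$. By construction $\hat\rho_A = \rho_A$, and multiplying the max-information inequality on both sides by $X_A \ox \1_B$ yields $\hat\rho_{AB} \leq 2^k\, \rho_A \ox \sigma_B$, so at the level of the witness $\sigma_B$ the max-information is preserved.

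The hard part is bounding the purified distance $P(\hat\rho_{AB}, \rho_{AB})$, since $X_A$ can have arbitrarily large operator norm on eigenspaces of $\tilde\rho_A$ with tiny eigenvalues. Following the cut-off idea used in the recent one-shot compression literature~\cite{Anshu2018a}, I would split $\tilde\rho_A = \tilde\rho_A^{\text{heavy}} + \tilde\rho_A^{\text{light}}$ according to a threshold $\delta \sim \ve^2/72$ on the eigenvalues. On the heavy part, $\|X_A\|_\infty \leq 1/\sqrt{\delta}$ is explicit, and the gentle-measurement lemma controls how much the projection onto heavy eigenvalues moves $\tilde\rho_{AB}$; together with the triangle inequality for $P$ this accounts for an $O(\ve)$ loss from $\ve/6$ up to $\ve$. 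The light part is absorbed by blending in a correction $\rho_A \ox \sigma_B$ of weight proportional to $1/\delta$, i.e.\ $\hat\rho_{AB} \leq (1 + 1/\delta)\, \rho_A \ox \sigma_B' $ for a renormalized state $\sigma_B'$, which is exactly the source of the term $\log(1 + 72/\ve^2) = g(\ve)$ added to $k$.

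The main obstacle is choosing the truncation threshold $\delta$ so that both bounds hold \emph{simultaneously} with the explicit constants in the statement: too small a $\delta$ hurts the max-information bound, while too large a $\delta$ discards too much mass and breaks the $\ve$-closeness. Balancing them produces the quadratic-in-$1/\ve$ factor in $g(\ve)$, and the factor of $6$ in the smoothing parameter $\ve/6$ is precisely what is needed to absorb the various $\sqrt{\,\cdot\,}$ losses coming from gentle measurement and the repeated applications of the purified-distance triangle inequality.
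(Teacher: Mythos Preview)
The paper organizes the argument differently and sidesteps the step you identify as hard. It introduces the intermediate quantity $\widetilde I_{\max}^\ve(A:B)_\rho := \inf_{\tilde\rho\approx^\ve\rho}\inf_{\sigma_B}D_{\max}(\tilde\rho_{AB}\|\rho_A\ox\sigma_B)$ (first tensor factor frozen at $\rho_A$, but \emph{no} marginal constraint on $\tilde\rho$), and then (i) quotes \cite{Anshu2018a} verbatim for $\widehat I_{\max}^\ve \leq \widetilde I_{\max}^{\ve/3}+g(\ve)$, and (ii) proves $\widetilde I_{\max}^\ve \leq I_{\max}^{\ve/2}$ by a tilting almost identical to yours but with the polar unitary inserted: $\tilde\rho_{AB}=\rho_A^{1/2}V_A\bar\rho_A^{-1/2}\,\bar\rho_{AB}\,\bar\rho_A^{-1/2}V_A^\dagger\rho_A^{1/2}$, where $V_A$ is chosen so that $\tr\rho_A^{1/2}\bar\rho_A^{1/2}V_A=F(\rho_A,\bar\rho_A)$. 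With $V_A$ in place a one-line Uhlmann computation on purifications gives $F(\tilde\rho_{AB},\bar\rho_{AB})\geq F(\rho_A,\bar\rho_A)$, so the purified distance is controlled at the cost of only a factor $2$ in the smoothing parameter and no truncation is needed in step (ii).

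Your $X_A=\rho_A^{1/2}\tilde\rho_A^{-1/2}$ drops $V_A$; this buys you the exact marginal $\hat\rho_A=\rho_A$, but the Uhlmann overlap then becomes $|\tr\rho_A^{1/2}\tilde\rho_A^{1/2}|$ rather than $\|\rho_A^{1/2}\tilde\rho_A^{1/2}\|_1$, which does not control $P(\hat\rho,\tilde\rho)$ when $\rho_A$ and $\tilde\rho_A$ fail to commute. Your proposed repair --- a threshold $\delta$ on the eigenvalues of $\tilde\rho_A$ --- does not work as stated: the ``light'' part need not have small total mass (you only get $\tr\tilde\rho_A^{\text{light}}<|A|\,\delta$, which is dimension-dependent), and on the ``heavy'' part the bound $\|X_A P_h\|_\infty\leq\delta^{-1/2}\sim 1/\ve$ is far from a contraction and says nothing about $P(X_A\tilde\rho X_A^\dagger,\tilde\rho)$. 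There is also an internal tension: your un-truncated $\hat\rho_{AB}$ already has $\hat\rho_A=\rho_A$ and $\hat\rho_{AB}\leq 2^k\rho_A\ox\sigma_B$, leaving no room for the additive $g(\ve)$; once you truncate, the marginal is no longer $\rho_A$ and you have not specified the correction that restores it. The cut-off that actually produces $g(\ve)$ in \cite{Anshu2018a} is on the region where $\tilde\rho_A$ is small \emph{relative to} $\rho_A$, not in absolute value, and the paper avoids the whole issue by first fixing the tensor factor via the $V_A$-tilting and then invoking that result directly.
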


\begin{theorem}\label{AEP directly}
For any quantum channel $\cN$ we have the AEP for the channel's smooth max-information:
\begin{align}
\lim_{\ve \to 0}\lim_{n\to \infty} \frac{1}{n}  I_{\max}^\ve(A:B)_{\cN^{\ox n}} =  I(A:B)_{\cN}.
\end{align}
\end{theorem}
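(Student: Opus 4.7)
I would split the AEP into its two inequalities and reduce each to the known state-level AEP, $\lim_{\ve\to 0}\lim_{n\to\infty}\frac{1}{n}I_{\max}^\ve(A:B)_{\rho^{\ox n}}=I(A:B)_\rho$, combined with the structural facts already in the paper: the input-independence of $I_{\max}(A:B)_\cN$ for any full-rank input (Remark~\ref{rem: Imax input}) and the fixed-marginal smoothing bound~\eqref{eq:I max lemma} relating $I_{\max}^\ve$ and $\widehat I_{\max}^\ve$. Throughout, fix a purification $\phi^*_{AA'}$ of a full-rank state $\phi^*_{A'}$ such that $I(A:B)_{\cN_{A'\to B}(\phi^*_{AA'})}$ is arbitrarily close to $I(A:B)_\cN$; continuity of the state mutual information ensures that the optimum is approached by full-rank inputs.

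\textbf{Converse ($\geq$).} For any CPTP $\widetilde\cN^{(n)}$ with $\frac{1}{2}\|\widetilde\cN^{(n)}-\cN^{\ox n}\|_\di\leq\ve$, I would chain
\begin{equation*}
I_{\max}(A^n:B^n)_{\widetilde\cN^{(n)}}\;=\;I_{\max}(A^n:B^n)_{\widetilde\cN^{(n)}(\phi^{*\ox n}_{AA'})}\;\geq\;I(A^n:B^n)_{\widetilde\cN^{(n)}(\phi^{*\ox n}_{AA'})}
\end{equation*}
using Remark~\ref{rem: Imax input} and the standard relation $I_{\max}\geq I$ for states. Diamond closeness of the channels gives trace-norm closeness of their outputs on $A^nB^n$, so the Alicki--Fannes--Audenaert continuity bound for the quantum mutual information yields $I(A^n:B^n)_{\widetilde\cN^{(n)}(\phi^{*\ox n})}\geq n\,I(A:B)_{\cN(\phi^*)}-O(\ve\,n\log|B|)$ after invoking additivity on product states. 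Taking the infimum over $\widetilde\cN^{(n)}$, dividing by $n$, and letting $n\to\infty$, then $\ve\to 0$, and finally $\phi^*$ to the optimum yields $\liminf\geq I(A:B)_\cN$.

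\textbf{Achievability ($\leq$).} Set $\rho:=\cN_{A'\to B}(\phi^*_{AA'})$. The state-level AEP applied to $\rho^{\ox n}$, combined with~\eqref{eq:I max lemma}, produces for every small $\ve>0$ and large enough $n$ a subnormalized state $\widetilde\rho^{(n)}_{A^n B^n}$ with $\widetilde\rho^{(n)}_{A^n}=\phi_A^{*\ox n}$, purified-close to $\rho^{\ox n}$, and satisfying $I_{\max}(A^n:B^n)_{\widetilde\rho^{(n)}}\leq n\,I(A:B)_{\cN(\phi^*)}+o(n)$. Because $\phi^*_A$ is full rank, the Choi matrix $J_{\widetilde\cN^{(n)}}:=\bigl(\sqrt{\phi_A^{*-1}}^{\ox n}\ox I_{B^n}\bigr)\widetilde\rho^{(n)}\bigl(\sqrt{\phi_A^{*-1}}^{\ox n}\ox I_{B^n}\bigr)$ determines a unique CPTP map $\widetilde\cN^{(n)}$ with $\widetilde\cN^{(n)}(\phi^{*\ox n}_{AA'})=\widetilde\rho^{(n)}$, and by Remark~\ref{rem: Imax input} its max-information coincides with that of $\widetilde\rho^{(n)}$ and is thereby already suitably bounded.

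\textbf{Main obstacle.} The essential step is to verify $\frac{1}{2}\|\widetilde\cN^{(n)}-\cN^{\ox n}\|_\di\to 0$ as $n\to\infty$, so that $\widetilde\cN^{(n)}$ is a legitimate competitor in the infimum defining $I_{\max}^\ve(A:B)_{\cN^{\ox n}}$. A direct conversion from the state purified distance to the channel diamond norm incurs a blow-up by $\lambda_{\min}(\phi_A^*)^{-n}$, which is exponential in $n$ and therefore useless. I would overcome this by symmetrizing $\widetilde\cN^{(n)}$ over permutations of the $n$ subsystems (which does not increase $I_{\max}$) and then invoking the post-selection technique of Christandl--K\"onig--Renner, bounding the diamond distance between two permutation-covariant channels by a $\operatorname{poly}(n)$ factor times their discrepancy on a universal de Finetti input, a quantity controllable from the AEP-based bound on $\widetilde\rho^{(n)}$. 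The resulting polynomial loss is negligible after normalization by $n$, closing the argument.
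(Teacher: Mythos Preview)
Your converse direction is essentially the paper's argument: input-independence, $D_{\max}\ge D$, continuity of $I(A{:}B)$, additivity on products, and a limiting full-rank optimizer. Fine.

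The achievability direction has a real gap. You build $\widetilde\cN^{(n)}$ so that its output on the \emph{product} input $\phi^{*\ox n}_{AA'}$ is purified-close to $\cN^{\ox n}(\phi^{*\ox n}_{AA'})$, then propose to symmetrize and invoke post-selection. But the Christandl--K\"onig--Renner bound controls the diamond distance of two permutation-covariant channels by $\operatorname{poly}(n)$ times their trace-norm discrepancy on the \emph{purified de Finetti state} $\omega^n_{RAA'}$, not on $\phi^{*\ox n}_{AA'}$. Your construction gives you no handle on $\|(\overline\cN^{(n)}-\cN^{\ox n})(\omega^n_{RAA'})\|_1$; converting control on the product input to control on the de Finetti (or Choi) input is precisely the step that costs $\lambda_{\min}(\phi^*_{A'})^{-n}$, i.e., the very blow-up you correctly identified as fatal. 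So the sentence ``a quantity controllable from the AEP-based bound on $\widetilde\rho^{(n)}$'' is the unjustified step.

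The paper runs the argument in the opposite order to avoid this. It takes $\omega^n_{RAA'}$ as the input from the outset (legitimate by Remark~\ref{rem: Imax input}), restricts to permutation-invariant smoothing channels, and applies post-selection \emph{first} to relax the diamond-norm constraint in the definition of $I_{\max}^\ve(A{:}B)_{\cN^{\ox n}}$ to a trace-norm constraint on the de Finetti output. This yields $\widehat I_{\max}^{\ve'}(RA{:}B)_{\cN^{\ox n}(\omega^n)}$, and then \eqref{eq:I max lemma} gives the unrestricted $I_{\max}^{\ve''}(RA{:}B)_{\cN^{\ox n}(\omega^n)}$. The remaining work is that $\cN^{\ox n}(\omega^n)$ is \emph{not} a tensor power, so the state AEP does not apply directly; the paper reduces it to a maximum over tensor powers via a chain of state-level estimates (swap of registers, removal of the purifying system $R$ at cost $2\log|R|=O(\log n)$, writing $\omega^n_{AA'}$ as a $\operatorname{poly}(n)$-term mixture, quasi-convexity), and only then invokes the state AEP. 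If you switched your input to $\omega^n$ to make post-selection work, you would be forced into this same chain; the shortcut through a product input does not close.
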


\begin{proof}
	The proof strategy is as follows. We first use the post-selection technique to show that the channel's smooth max-information is upper bounded by the quantity in Eq.~\eqref{eq:I max variation}. By Eq.~\eqref{eq:I max lemma} we can then use the basic properties of the smooth max-information developed in~\cite{Berta2011a} to show one direction of the proof. The other direction can be proved via the continuity of the mutual information of quantum states. 
	
	Consider $n$ uses of the channel $\cN$ and let $\o_{RAA'}^n$ be the purification of the de Finetti state $\o^n_{AA'} = \int \sigma_{AA'}^{\ox n} d(\sigma_{AA'})$ with pure states $\sigma_{AA'} = \ket{\sigma}\bra{\sigma}_{AA'}$ and $d(\cdot)$ the measure on the normalized pure states induced by the Haar measure. Furthermore we can assume without loss of generality that $|R| \leq (n+1)^{|A|^2 -1}$~\cite{Christandl2009}. Note that $\w^n_{A'}$ is a full rank state. We have the following chain of inequalities
	\begin{adjustwidth}{-1em}{}
	\begin{subequations}
		\begin{align}
		& I_{\max}^\ve(A:B)_{\cN^{\ox n}} \notag\\
		= & \inf \left\{I_{\max}(RA:B)_{\widetilde \cN^{n}(\o_{RAA'}^n)} \ \Big|\ \frac12 \left\|\widetilde \cN^n - \cN^{\ox n}\right\|_\di \leq \ve,\, \widetilde \cN^n \in \text{CPTP}(A'^n:B^n)\right\}\label{a}\\
		%%%%%%%%%%%%%%%%%%%%%%%%%%%%%%%%%%%%%%%%%%%%%%%%%%
        \leq & \inf \left\{I_{\max}(RA:B)_{\widetilde \cN^{n}(\o_{RAA'}^n)} \ \Big|\ \frac12 \left\|\widetilde \cN^n - \cN^{\ox n}\right\|_\di \leq \ve,\, \widetilde \cN^n \in \text{Perm}(A'^n:B^n)\right\}\label{b}\\
        \leq & \inf \left\{I_{\max}(RA:B)_{\widetilde \cN^{n}(\o_{RAA'}^n)} \ \Big|\ \frac12 \left\|(\widetilde \cN^n - \cN^{\ox n})(\o_{RAA'}^n)\right\|_1 \leq \ve_1,\, \widetilde \cN^n \in \text{Perm}(A'^n:B^n)\right\}\label{c}\\
		\leq & \inf \left\{I_{\max}(RA:B)_{\widetilde \cN^{n}(\o_{RAA'}^n)} \ \Big|\ \widetilde \cN^n (\o_{RAA'}^n) \approx^{\ve_2} \cN^{\ox n}(\o_{RAA'}^n),\, \widetilde \cN^n \in \text{Perm}(A'^n:B^n)\right\}\label{d},
		\end{align}
	\end{subequations}
	\end{adjustwidth}
	where $\ve_1 = \ve(n+1)^{-(|A'|^2-1)}$ and we take $\ve_2 = \ve_1$. Note that the discussion works for any $0 < \ve_2 \leq \ve_1$. 
    In~\eqref{a}, we choose $\o^n_{RAA'}$ as the input state of the channel's max-information \eqref{Imax equality}.
    In~\eqref{b}, we restrict the channel $\widetilde \cN^n$ to be permutation invariant, where $\text{Perm}(A'^n:B^n):=\{\cN^n\in \text{CPTP}(A'^n:B^n) \,|\, \pi^n_B \circ \cN^n \circ \pi^n_{A'} = \cN,\ \text{for all permutation}\ \pi^n\}$ denotes the set of all permutation invariant channels.
	In~\eqref{c}, we use post-selection technique~(see~\cite[Prop.~D.4]{Berta2011a}), which relaxes the diamond norm to the trace norm. In~\eqref{d}, we replace the trace norm with the purified distance due to the inequality $\frac12\|\rho-\sigma\|_1\leq P(\rho,\sigma)$. 
	
	Exploiting the permutation invariance of $\cN^{\ox n}$, we know that the optimal solution of~\eqref{d} can be still taken at a permutation invariant channel even if we relax the set $\text{Perm}(A'^n:B^n)$ to all CPTP maps, which is then equivalent to optimize over all quantum states with marginal $\o^n_{RA}$ and $\ve_2$-close to $\cN^{\ox n}(\o_{RAA'}^n)$. Specifically, from Lemma~\ref{two sets same 1} in Appendix~\ref{tech lemmas}, we know that the optimization in~\eqref{d} is equivalent to 
	\begin{align}\label{two sets}
	\inf_{\sigma^n_{RAB} \in K} I_{\max}(RA:B)_{\sigma^n_{RAB}} \quad \text{with}\quad K:= \left\{\sigma^n_{RAB} \Big|\, \sigma^n_{RAB} \approx^{\ve_2} \cN^{\ox n}(\o_{RAA'}^n), \sigma^n_{RA} = \o^n_{RA} \right\},
	\end{align}
    which is exactly the definition of $\widehat I_{\max}^{\ve_2}(RA:B)_{\cN^{\ox n}(\o_{RAA'}^n)}$.
    Thus we have
	\begin{align}
	I_{\max}^\ve(A:B)_{\cN^{\ox n}} \leq \widehat I_{\max}^{\ve_2}(RA:B)_{\cN^{\ox n}(\o_{RAA'}^n)}.
	\end{align}
	From Eq.~\eqref{eq:I max lemma}, denote $\ve_3 = \ve_2/6$, we have
	\begin{align}\label{channel I max and global I max}
	I_{\max}^\ve(A:B)_{\cN^{\ox n}} \leq I_{\max}^{\ve_3}(RA:B)_{\cN^{\ox n}(\o_{RAA'}^n)} + g(\ve_2).
	\end{align}
	Then we can use some known properties of the smooth max-information from~\cite{Ciganovic2014,Berta2011a}, which leads to
	\begin{subequations}
		\begin{align}
		I_{\max}^{\ve_3}(RA:B)_{\cN^{\ox n}(\o_{RAA'}^n)}
		& \leq I^{\ve_4}_{\max}(B:RA)_{\cN^{\ox n}(\o^n_{RAA'})} + f(\ve_4)\label{g}\\
		%%%%%%%%%%%%%%%%%%%%%%%%%%%%%%%%%%%%%%%%%%%%%%%%%%%
		& \leq I^{\ve_4}_{\max}(B:A)_{\cN^{\ox n}(\o^n_{AA'})}  + 2\log |R| + f(\ve_4)\label{h}\\
		%%%%%%%%%%%%%%%%%%%%%%%%%%%%%%%%%%%%%%%%%%%%%%
		& = I^{\ve_4}_{\max}(B:A)_{\cN^{\ox n}(\sum_{i\in I} p_i (\sigma_{AA'}^i)^{\ox n})}  + 2\log |R| + f(\ve_4) \label{i}\\
		%%%%%%%%%%%%%%%%%%%%%%%%%%%%%%%%%%%%%%%%%%%
		& \leq \max_{\sigma_{AA'}^i}I^{\ve_4}_{\max}(B:A)_{\cN^{\ox n}((\sigma_{AA'}^i)^{\ox n})} + \log |I|+ 2\log |R| + f(\ve_4) \label{j}\\
		%%%%%%%%%%%%%%%%%%%%%%%%%%%%%%%%%%%%%%%%%%%%%
		& \leq \max_{\sigma_{AA'}}I^{\ve_4}_{\max}(B:A)_{\cN(\sigma_{AA'})^{\ox n}} + \log |I|+ 2\log |R| + f(\ve_4),\label{k}
		%%%%%%%%%%%%%%%%%%%%%%%%%%%%%%%%%%%%%%%%%%%%%%%%%%%
		\end{align}
	\end{subequations}
where $\ve_4 = \ve_3/2$, $f(\ve) = \log (\frac{1}{1-\sqrt{1-\ve^2}}+\frac{1}{1-\ve})$ and $|I| = (n+1)^{2|A||A'|-2}$. In the first line, we swap the system order according to \cite[Corollary 5]{Ciganovic2014}. In the second line, we get rid of purification system $R$ according to \cite[Lemma B.12]{Berta2011a}. In the third line, we express the integral $\o^n_{AA'} = \int \sigma_{AA'}^{\ox n} d(\sigma_{AA'})$ into convex combination of finite number of operators according to \cite[Corollary D.6]{Berta2011a}. In the forth line, we use the quasi-convexity of the smooth max-information \cite[Lemma B.21]{Berta2011a}. In the last line, we relax the maximization to all pure states $\sigma_{AA'}$. 

Combining Eqs.~\eqref{channel I max and global I max}, \eqref{k} and the AEP for the smooth max-information from~\cite[Lemma B.24]{Berta2011a}, we get
\begin{align}\label{upper bound}
    \lim_{\ve\to 0} \lim_{n\to \infty} \frac1n I_{\max}^\ve(A:B)_{\cN^{\ox n}} \leq  \max_{\sigma_{AA'}} I(A:B)_{\cN(\sigma_{AA'})} = I(A:B)_{\cN}.
\end{align}

On the other hand, suppose the optimal solution of $I(A:B)_{\cN}$ is taken at $\rho_{A'}$ with a purification~$\phi_{AA'}$. Since we can always find a full rank state that is arbitary close to $\rho_{A'}$, thus it gives the mutual information arbitary close to $I(A:B)_\cN$ due to the continuty. In the following, we can assume that $\rho_{A'}$ is of full rank without loss of generality and have the  chain of inequalities
\begin{subequations}
\begin{align}
    I^\ve_{\max}(A:B)_{\cN^{\ox n}} & = \inf_{\substack{\frac12\|\widetilde \cN^n- \cN^{\ox n}\|_\di \leq \ve\\ \widetilde \cN^n \in\ \text{CPTP}(A'^n:B^n)}} \inf_{\sigma_{B}^n \in \cS_=(B^{\ox n})} D_{\max}(\widetilde \cN^n_{A'\to B}(\phi_{AA'}^{\ox n})\|  \phi_{A}^{\ox n}  \ox \sigma_{B}^n)\\
    & \geq \inf_{\substack{\frac12\|\widetilde \cN^n- \cN^{\ox n}\|_\di \leq \ve\\ \widetilde \cN^n \in\ \text{CPTP}(A'^n:B^n)}} \inf_{\sigma_{B}^n \in \cS_=(B^{\ox n})} D(\widetilde \cN^n_{A'\to B}(\phi_{AA'}^{\ox n})\|  \phi_{A}^{\ox n}  \ox \sigma_{B}^n)\\
    & = \inf_{\substack{\frac12\|\widetilde \cN^n- \cN^{\ox n}\|_\di \leq \ve\\ \widetilde \cN^n \in\ \text{CPTP}(A'^n:B^n)}} I(A:B)_{\widetilde \cN^n_{A'\to B}(\phi_{AA'}^{\ox n})}\\
    & \geq I(A:B)_{\cN^{\ox n}_{A'\to B}(\phi_{AA'}^{\ox n})} - (2 n \ve \log |A| + (1+\ve) h_2(\ve/(1+\ve)))\\
    & = n I(A:B)_{\cN_{A'\to B}(\phi_{AA'})} - (2 n \ve \log |A| + (1+\ve) h_2(\ve/(1+\ve)))\\
    & = n I(A:B)_{\cN} - (2 n \ve \log |A| + (1+\ve) h_2(\ve/(1+\ve)))
\end{align}
\end{subequations}
where $h_2(\cdot)$ is the binary entropy. In the second line, we use the fact that max-relative entropy is never smaller than the relative entropy \cite{Datta2009}. The third line follows from the definition of the mutual information of a quantum state. The fourth line follows from the continuity of quantum mutual information in Lemma~\ref{continuity of mutual information} (Appendix~\ref{tech lemmas}). The fifth line follows from the additivity of quantum mutual information. The last line follows from the assumption that $\phi_{AA'}$ is the optimizer of $I(A:B)_{\cN}$. Finally, we have
\begin{align}\label{lower bound}
\lim_{\ve\to 0} \lim_{n\to \infty} \frac1n I_{\max}^\ve(A:B)_{\cN^{\ox n}} \geq  I(A:B)_{\cN}.
\end{align}
Combining Eqs.~\eqref{upper bound} and~\eqref{lower bound}, we conclude the claim.
\end{proof}

\vspace{0.2cm}
After this work there is an alternative proof of Eq.~\eqref{upper bound} given by Gour and Wilde in~\cite{Gour2018}. Their proof uses the sandwiched R\'{e}nyi mutual information and its relation with the max-information, different from the post-selection technique we use in this work.

%%%%%%%%%%%%%%%%%%%%%%%%%%%%%%%%%%%%%%%%%%%%%%%%%%%%%%%%%%%%%%%%%%%

\section{Examples} \label{sec:examples}

In this section, we apply our results to some basic channels. For classical channels, the one-shot $\ve$-error quantum simulation cost can be given by a linear program as shown in Eq.~\eqref{CC LP} (Appendix~\ref{example app}). Using the symmetry of the quantum depolarizing channel, we can also simplify its $n$-shot simulation cost as a linear program. Moreover, the zero-error simulation cost of various channels can be solved analytically.

\vspace{0.2cm}
\noindent\textbf{Example 1.}~The quantum depolarizing channel is given by $\cD_p(\rho) = (1-p)\rho+p\cdot \frac{\1}{d}$ with dimension~$d$. Its \Choi matrix $J_{\cD_p}$ commutes with any local unitary $U\ox \overline U$ and $J_{\cD_p}^{\ox n}$ is invariant under any permutation of the tensor factors.
Exploiting these symmetries, we can simplify the SDP (\ref{SNS SDP}) for $\cD_p^{\ox n}$ to a linear program~\eqref{DP LP} in Appendix~\ref{example app}. Numerical implementation is shown in Fig.~\ref{DPchannel LP plot}. We can see that as the number of channel uses $n$ increases, the average quantum simulation cost will approach to its entanglement-assisted quantum capacity~\cite{Bennett1999}, i.e., half of the quantum mutual information of the channel.

\begin{figure}[H]
\centering
\begin{tikzpicture}
\node[inner sep=0pt] at (0,0)
{\includegraphics[width=9cm]{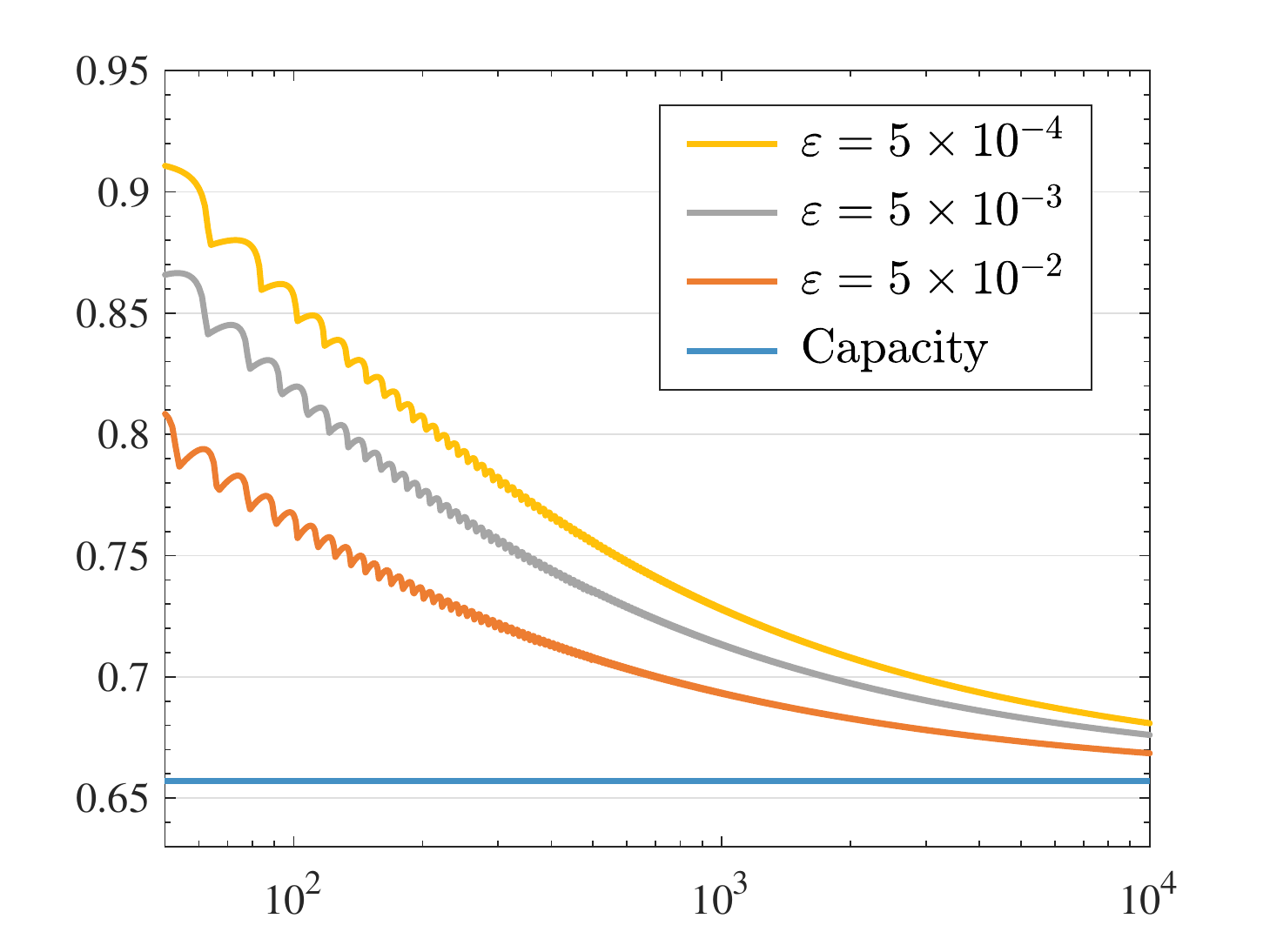}};
\node[] at (0.2,-3.5) {Number of channel uses, $n$};
\node[rotate=90] at (-4.4,0) {Qubits per channel use};
\end{tikzpicture}
\caption{Exact value by the linear program~\eqref{DP LP} of the average simulation cost for three different error tolerances $\ve \in \{5\times10^{-4},5\times10^{-3},5\times10^{-2}\}$ and the qubit depolarizing channel with failure probability $p=0.15$. The lowest line marks the entanglement-assisted quantum capacity of the channel (roughly 0.657 qubits per channel use).}
\label{DPchannel LP plot}
\end{figure}

Recall that the primal and dual SDPs of the zero-error  simulation cost are given by \cite{Duan2016}
\begin{align}
\textbf{Primal:} \quad & S_{\rm{NS},0}(\cN) =\frac{1}{2} \log \inf \big\{\tr V_B\ \big|\ J_{\cN}\le \1_A\ox V_B\,\big\},\\
\textbf{Dual:} \quad & S_{\rm{NS},0}(\cN)=\frac{1}{2} \log \sup \big\{\tr J_\cN X_{AB}\ \big|\ \tr_A X_{AB} \le \1_B, X_{AB} \geq 0\,\big\}.
\end{align}
We study some fundamental channels and show their analytical solutions by explicitly constructing feasible solutions in both primal and dual problems, respectively. Using the weak duality, we can argue that the feasible solutions we construct are optimal.

\vspace{0.2cm}
\noindent\textbf{Example 2.}~The quantum depolarizing channel is $\cD_p(\rho) = (1-p)\rho + p\cdot \frac{\1}{d}$ with dimension $d$. Taking \begin{align}
V_B = (d(1-p)+\frac{p}{d})\1_B, \quad \text{and} \quad   X_{AB} = \sum_{i,j=0}^{d-1} \ket{ii}\bra{jj},
\end{align} 
in the primal and dual problems respectively, we can verify that they are feasible solutions. Thus, we have
\begin{align}
\frac12 \log (d^2(1-p)+p) = \frac{1}{2}\log \tr J_{\cD_p} X_{AB} \leq S_{\rm{NS},0}(\cD_p) \leq \frac12 \log \tr V_B = \frac12 \log (d^2(1-p)+p).
\end{align}
We find that
\begin{align}\label{dp ns zero rate}
S_{\rm{NS},0}(\cD_p) = \frac12 \log (d^2(1-p)+p).
\end{align}

\vspace{0.2cm}
\noindent\textbf{Example 3.}~The amplitude damping channel is $\cN_r(\rho)=\sum_{i=0}^1 E_i \rho E_i^\dag$
with $E_0=\ketbra{0}{0}+\sqrt{1-r}\ketbra{1}{1}$, $E_1=\sqrt{r}\ketbra{0}{1}$
and $0\leq r\leq 1$. The optimal solutions are given by 
\begin{align}
V_B=(1+\sqrt{1-r})\ket 0 \bra 0+(\sqrt{1-r}+1-r)\ket 1 \bra 1 \ \ \text{and} \ \ X_{AB}=(\ket {00} +\ket{11})(\bra {00}+\bra {11}).
\end{align}
We find that
\begin{align}
S_{\rm{NS},0}(\cN_r)=\frac{1}{2}\log (2(1+\sqrt{1-r})-r).
\end{align}

\vspace{0.2cm}
\noindent\textbf{Example 4.}~The dephasing channel is $\cZ_p(\rho) = (1-p)\rho + p Z\rho Z$ with $Z = \ket{0}\bra{0}-\ket{1}\bra{1}$. The optimal solutions are given by 
\begin{align}
V_B = (|2p-1|+1)\1_B\quad\text{and}\quad X_{AB} = (\ket{00}+\ket{11})(\bra{00}+\bra{11}).
\end{align} 
We find that
\begin{align}
S_{\rm{NS},0}(\cZ_p) = \frac{1}{2}\log (|4p-2|+2).
\end{align}

\vspace{0.2cm}
\noindent\textbf{Example 5.}~The quantum erasure channel is $\cE_p(\rho) = (1-p)\rho + p\ket{e}\bra{e}$ with $\ket{e}$ orthogonal to the input Hilbert space. The optimal solutions are given by 
\begin{align}
V_B = d(1-p)\sum_{i,j=0}^{d-1}\ket{i}\bra{i} + p \ket{d}\bra{d}\quad\text{and}\quad X_{AB} = \sum_{i,j=0}^{d-1}\ket{ii}\bra{jj}+\frac1d\sum_{i=0}^{d-1}\ket{i}\bra{i}\ox \ket{d}\bra{d}.
\end{align} 
We find that
\begin{align}\label{erasure channel ns zero rate}
S_{\rm{NS},0}(\cE_p) = \frac{1}{2}\log (d^2(1-p)+p).
\end{align}
 
 Finally, we plot the zero-error NS-assisted channel simulation cost of these four channels as a function of their noise parameters in the following Figure~\ref{example channel plot}. The figure is plotted for the qubit case, i.e, $d =2$. Note that the quantum depolarizing channel and the quantum erasure channel have exactly the same rate given by Eqs.~\eqref{dp ns zero rate} and~\eqref{erasure channel ns zero rate}.

\begin{figure}[t]
\centering
\begin{tikzpicture}
\node[inner sep=0pt] at (0,0)
{\includegraphics[width=9cm]{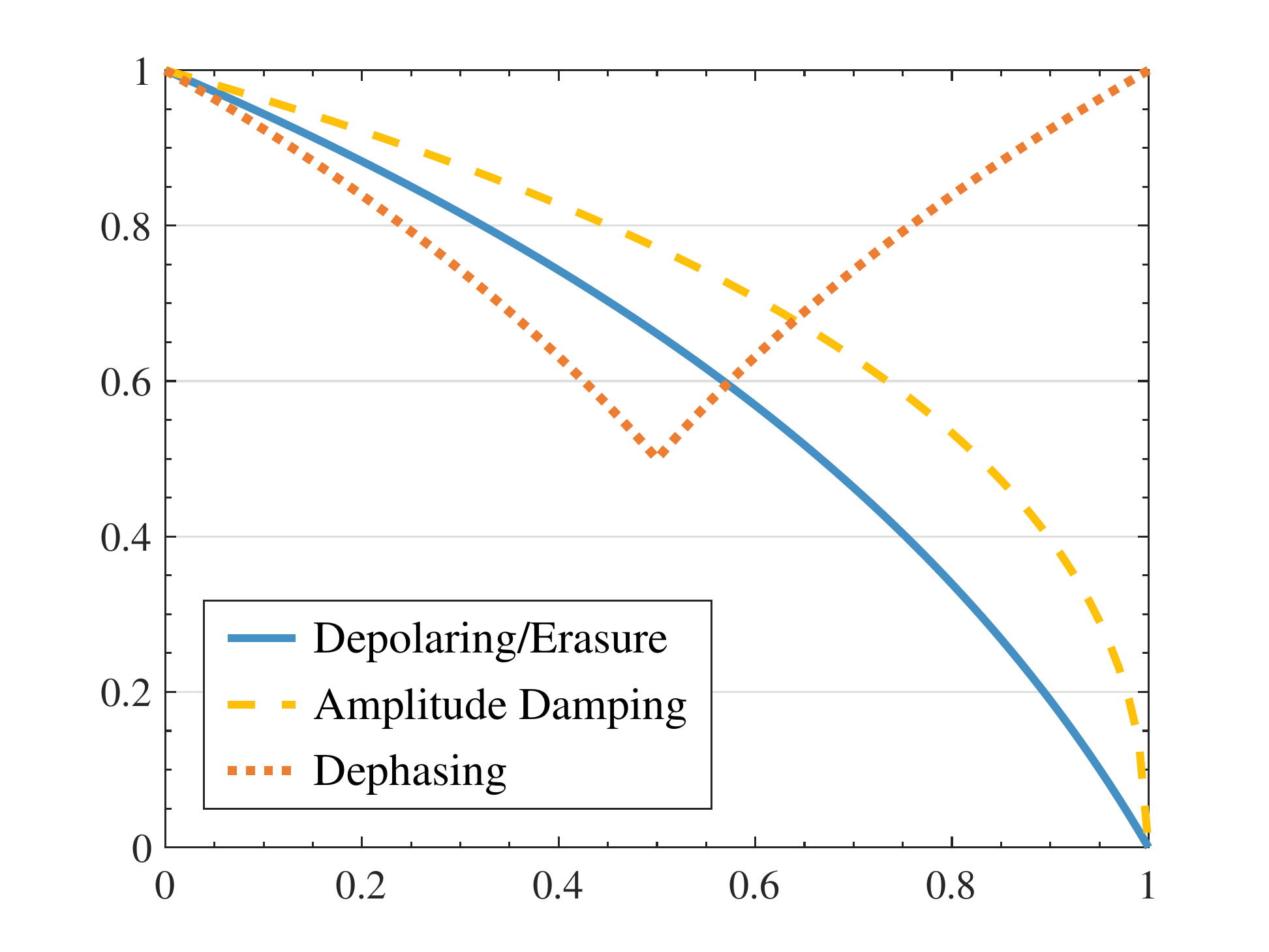}};
\node[] at (0.2,-3.3) {Channel parameter};
\node[rotate=90] at (-4.1,0) {Channel simulation cost (qubits)};
\end{tikzpicture}
\caption{The zero-error NS-assisted channel simulation cost  of the qubit depolarizing channel / qubit erasure channel, qubit amplitude damping channel and qubit dephasing channel as a function of each channel's noise parameter. }
\label{example channel plot}
\end{figure}
%%%%%%%%%%%%%%%%%%%%%%%%%%%%%%%%%%%%%%%%%%%%%%%%%%%%%%%%%%%%%%%%%%%

\section{Discussion}

Since the entanglement-assisted capacity allows a single-letter characterization, it is natural to consider a second-order refinement thereof. A second-order expansion of an achievable rate was established in~\cite{Datta2016} but no matching second-order converse bound is known. Our one-shot NS-assisted quantum simulation cost and the channel's smooth max-information may provide some insights in this direction.

Suppose a quantum channel $\cN$ can be used to simulate a noiseless channel $\id_{m_1}$ with dimension $m_1$ and on the other hand it requires a noiseless channel $\id_{m_2}$ with dimension $m_2$ to simulate itself, then we necessarily have $m_2 \geq m_1$ by the definition of simulation. This means that the simulation cost of a channel operationally provides a converse bound for its channel capacity. However this approach does not provide a tighter bound than the NS-assisted capacity in the one-shot and asymptotic setting (see, e.g., \cite{Leung2015c,Matthews2012,Wang2016g}).

From Eq.~\eqref{distance characterization}, the AEP of the channel's smooth max-information can be equivalently written as 
\begin{align}
	\lim_{\ve \to 0} \lim_{n\to \infty} \frac1n \inf_{\cM^n \in \boldsymbol \cG_n} D^\ve_{\max}(\cN^{\ox n}\|\cM^n) = \inf_{\cM \in \boldsymbol \cG} D(\cN\|\cM),
\end{align}
where $\boldsymbol \cG_n$ represents the set of constant channels from $A'^n$ to $B^n$ and the channel divergence is defined as~\cite{Cooney2014,leditzky2018approaches}
$D(\cN\|\cM):= \max_{\rho_A}D(\cN_{A'\to B}(\phi_{AA'})\|\cM_{A'\to B}(\phi_{AA'}))$,
with $\phi_{AA'}$ a purification of $\rho_A$. An interesting question is to consider the channel's AEP beyond the set of constant channels $\boldsymbol\cG_n$, such as the singleton $\{\cM^{\ox n}\}$~\footnote{This problem has been independently found and presented by Andreas Winter in the open problems session at the Rocky Mountain Summit on Quantum Information. See \url{http://jila.colorado.edu/rmsqi/open_problems/open-problem-winter.jpg}.}. Can we obtain a channel's generalization of quantum Stein's lemma~\cite{Hiai1991}? A partial progress to this problem regarding classical-quantum channels has been recently given in~\cite{berta2018amortized}.

%%%%%%%%%%%%%%%%%%%%%%%%%%%%%%%%%%%%%%%%%%%%%%%%%%%%%%%%%%%%%%%%%%%

\section*{Acknowledgements}

We are grateful to Runyao Duan for discussions. MB thanks the Centre for Quantum Software and Information at the University of Technology Sydney for their hospitality while part of this work was done. MT acknowledges an Australian Research Council Discovery Early Career Researcher Award, project No. DE160100821.

%%%%%%%%%%%%%%%%%%%%%%%%%%%%%%%%%%%%%%%%%%%%%%%%%%%%%%%%%%%%%%%%%%%

\bibliographystyle{IEEEtran}
\bibliography{arXiv}

%%%%%%%%%%%%%%%%%%%%%%%%%%%%%%%%%%%%%%%%%%%%%%%%%%%%%%%%%%%%%%%%%%%

\begin{appendices}

\section{Linear programs} \label{example app}

For any classical channel $\cN(y|x)$, the SDP~\eqref{SNS SDP} will naturally reduce to a linear program. Specifically, its one-shot simulation cost is given by a linear program,
\begin{subequations}\label{CC LP}
\begin{align}
S^{(1)}_{\rm{NS},\ve}(\cN) = \log\ \inf &\quad \Big\lceil \sqrt{\sum V_y} \Big\rceil\\
\text{\rm s.t.} &\quad\ Y_{xy} \geq  \widetilde\cN(y|x) - \cN(y|x), Y_{xy}\ge 0, \forall x,y,\\
&\quad\ \widetilde\cN(y|x) \ge 0, \forall x,y, \ \sum\nolimits_y \widetilde\cN(y|x) =1, \forall x,\\
&\quad\ \widetilde\cN(y|x) \leq  V_y, \forall x, y,\ \sum\nolimits_{y}Y_{xy} \leq \ve, \forall x.
\end{align}
\end{subequations}

For the quantum depolarizing channel $\cD_p(\rho) = (1-p)\rho + p\cdot \frac{\1}{d}$, its \Choi matrix is given by $J_{\cD_p} = q_1 \Phi_d + q_2 \Phi_d^\perp$ where $q_1 = d(1-p)+\frac{p}{d}$, $q_2 = \frac{p}{d}$ and $\Phi_d$ is the maximally entangled state with dimension $d$, $\Phi_d^\perp = \1 - \Phi_d$.
Then we have
\begin{align}
J_{\cD_p}^{\ox n} = \sum_{k=0}^{n} p_k P_k^n(\Phi_d,\Phi_d^\perp) \quad \text{with}\quad p_k = q_1^{k}q_2^{n-k},
\end{align}
and $P_k^n(\Phi_d,\Phi_d^\perp)$ denotes the summation of $n$-fold tensor products of $\Phi_d$ and $\Phi_d^\perp$ with exactly $k$ factors of $\Phi_d$. For example,
$P_1^3(\Phi_d,\Phi_d^\perp) = \Phi_d^\perp \otimes \Phi_d^\perp \otimes \Phi_d + \Phi_d^\perp  \otimes \Phi_d \otimes \Phi_d^\perp + \Phi_d \otimes \Phi_d^\perp \otimes \Phi_d^\perp.$
Due to the symmetries of $J_{\cD_p}^{\ox n}$, we can take the optimal solution in SDP~\eqref{SNS SDP} in form of
\begin{align}
    J_{\widetilde \cN^n} = \sum_{k=0}^{n} r_k P_k^n(\Phi_d,\Phi_d^\perp),\quad  Y = \sum_{k=0}^{n} y_k P_k^n(\Phi_d,\Phi_d^\perp), \quad \text{and} \quad V = s  \1_{d}^{\ox n}.
\end{align}
Then we have the LP as follows,
\begin{subequations}\label{DP LP}
\begin{align}
S^{(1)}_{\rm NS,\varepsilon}(\cD_p^{\ox n}) = \log\ \inf &\quad \Big\lceil\sqrt{d^n \cdot s}\Big\rceil\\
\text{\rm s.t.} &\quad y_k - r_k + p_k \geq 0, y_k \geq 0, 0 \leq r_k \leq s, \forall k\\
&\quad \sum_{k=0}^n \binom{n}{k}(\frac{1}{d})^k(d-\frac{1}{d})^{n-k} r_k = 1,\\
&\quad \sum_{k=0}^n \binom{n}{k}(\frac{1}{d})^k(d-\frac{1}{d})^{n-k} y_k \leq \ve.
\end{align}
\end{subequations}

%%%%%%%%%%%%%%%%%%%%%%%%%%%%%%%%%%%%%%%%%%%%%%%%%%%%%%%%%%%%%%%%%%%

\section{Technical lemmas}\label{tech lemmas}

\begin{lemma} 
For any quantum state $\rho_{AB}$ and $\ve \in (0,1)$, it holds
\begin{align}\label{appen eq:I max lemma}
\widehat I_{\max}^{\ve}(A:B)_\rho \leq I_{\max}^{\ve/6}(A:B)_\rho + g(\ve)\quad\text{with}\quad g(\ve) = \log(1+72/\ve^2).
\end{align}
\end{lemma}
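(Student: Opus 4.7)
The plan is to pass from the optimizer of $I_{\max}^{\ve/6}$ to a state with the correct $A$-marginal while controlling both the purified distance and the max-information. Concretely, let $k := I_{\max}^{\ve/6}(A:B)_\rho$ and fix a near-optimizer $\widetilde\rho_{AB}$ with $P(\widetilde\rho,\rho)\le \ve/6$ together with $\sigma_B \in \cS_=(B)$ such that $\widetilde\rho_{AB} \le 2^k\, \widetilde\rho_A \ox \sigma_B$. The only obstruction to using $\widetilde\rho_{AB}$ as a witness for $\widehat I_{\max}^{\ve}(A:B)_\rho$ is that its $A$-marginal is merely $\ve/6$-close to $\rho_A$ rather than equal to it; so the whole game is to repair the marginal while paying only the additive term $g(\ve)$.

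Following the strategy of Anshu--Jain--Warsi, I would build the repaired state as a convex combination
\begin{equation*}
\widehat\rho_{AB} \;:=\; \widetilde\rho'_{AB} \;+\; \Delta_A \ox \sigma_B,
\end{equation*}
where $\widetilde\rho'_{AB}$ is a suitable \emph{sub-state} of $\widetilde\rho_{AB}$ whose $A$-marginal is dominated by $\rho_A$, and $\Delta_A := \rho_A - (\widetilde\rho')_A \ge 0$ exactly compensates the marginal deficit. The sub-state $\widetilde\rho'_{AB}$ is produced via a projection/truncation on the ``good'' spectral subspace where $\widetilde\rho_A$ and $\rho_A$ are comparable, quantitatively extracted from the closeness $P(\widetilde\rho_A,\rho_A)\le \ve/6$ by a quantum-substate-theorem-type argument. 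This gives the key multiplicative bound $(\widetilde\rho')_A \le (1+O(1/\ve^2))\,\rho_A$ while keeping $\widetilde\rho'_{AB}$ within purified distance $O(\ve)$ of $\widetilde\rho_{AB}$.

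The three properties are then checked in turn. The marginal is correct by construction. For max-information, twisting the hypothesis $\widetilde\rho'_{AB}\le \widetilde\rho_{AB} \le 2^k\widetilde\rho_A \ox \sigma_B$ against $(\widetilde\rho')_A \le (1+O(1/\ve^2))\rho_A$ yields $\widetilde\rho'_{AB}\le 2^k(1+O(1/\ve^2))\,\rho_A \ox \sigma_B$, while the second term satisfies $\Delta_A\ox\sigma_B \le \rho_A\ox\sigma_B$; summing,
\begin{equation*}
\widehat\rho_{AB} \;\le\; \bigl(2^k(1+O(1/\ve^2)) + 1\bigr)\,\rho_A\ox\sigma_B,
\end{equation*}
which bounds $I_{\max}(A:B)_{\widehat\rho}$ by $k + \log(1 + 72/\ve^2)$ once the constants from the sub-state theorem are tracked. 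For the purified distance, the triangle inequality combined with $P(\widetilde\rho',\widetilde\rho)=O(\ve)$, $P(\widetilde\rho,\rho)\le \ve/6$, and the fact that $\Delta_A\ox\sigma_B$ contributes at the trace-distance level $O(\ve)$ gives $P(\widehat\rho,\rho)\le \ve$ after the numerical constants are chosen (this is where the factor of $6$ between $\ve/6$ and $\ve$ is used up).

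The main obstacle is Step~1: constructing $\widetilde\rho'_{AB}$ simultaneously satisfying the multiplicative bound $(\widetilde\rho')_A \le (1+O(1/\ve^2))\rho_A$ and the closeness $\widetilde\rho' \approx \widetilde\rho$, since a naive spectral cutoff either loses too much weight or fails to produce a clean multiplicative comparison on the overlap subspace. This is exactly the trade-off quantified by a quantum substate theorem, and getting the precise numerical constant $72$ in $g(\ve) = \log(1+72/\ve^2)$ requires carefully optimizing the threshold in the spectral truncation; everything else in the proof is bookkeeping using the triangle inequality for $P(\cdot,\cdot)$ and monotonicity of $I_{\max}$ under the explicit domination bounds.
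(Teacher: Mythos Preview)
Your proposal follows the Anshu--Jain--Warsi strategy that the paper also invokes, but there is a genuine gap at the step you call ``twisting.'' From the operator inequality $\widetilde\rho'_{AB}\le\widetilde\rho_{AB}\le 2^k\,\widetilde\rho_A\ox\sigma_B$ together with the \emph{marginal} bound $(\widetilde\rho')_A\le c\,\rho_A$ you cannot conclude $\widetilde\rho'_{AB}\le 2^k c\,\rho_A\ox\sigma_B$: the right-hand side still carries $\widetilde\rho_A$, and a bound on $\tr_B\widetilde\rho'_{AB}$ gives no leverage for replacing $\widetilde\rho_A$ by $\rho_A$ there. What would actually be needed is an operator comparison of the form $P\widetilde\rho_A P\le c\,\rho_A$ on the truncated subspace. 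But even granting that (so the max-information bound goes through), you only get $(\widetilde\rho')_A=P\widetilde\rho_A P\le c\rho_A$ with $c=1+O(1/\ve^2)\gg1$, which does \emph{not} yield $\Delta_A=\rho_A-(\widetilde\rho')_A\ge0$; and rescaling $\widetilde\rho'$ by $1/c$ to force positivity of $\Delta_A$ destroys the closeness $P(\widetilde\rho',\widetilde\rho)=O(\ve)$. Your two requirements on $\widetilde\rho'$ (marginal dominated by $\rho_A$, and $O(\ve)$-closeness) are in tension in a way the sketch does not resolve.

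The paper sidesteps this by factoring through the auxiliary quantity $\widetilde I_{\max}^\ve(A:B)_\rho=\inf_{\widetilde\rho\approx^\ve\rho}\inf_{\sigma_B}D_{\max}(\widetilde\rho_{AB}\|\rho_A\ox\sigma_B)$, whose second slot already carries $\rho_A$ rather than $\widetilde\rho_A$. The inequality $\widehat I_{\max}^\ve\le\widetilde I_{\max}^{\ve/3}+g(\ve)$ is precisely what Anshu--Jain--Warsi prove (and is the marginal-repair step your sub-state construction is aimed at). The remaining passage $\widetilde I_{\max}^{\ve}\le I_{\max}^{\ve/2}$---i.e.\ converting $\overline\rho_A$ to $\rho_A$ in the reference---is handled not by spectral truncation but by conjugation: given the optimizer $\overline\rho_{AB}\le2^\lambda\overline\rho_A\ox\sigma_B$ for $I_{\max}^{\ve/2}$, set
\[
\widetilde\rho_{AB}\;=\;\rho_A^{1/2}V_A\,\overline\rho_A^{-1/2}\,\overline\rho_{AB}\,\overline\rho_A^{-1/2}V_A^\dagger\,\rho_A^{1/2},
\]
with $V_A$ the unitary realizing $F(\rho_A,\overline\rho_A)=\tr\rho_A^{1/2}\overline\rho_A^{1/2}V_A$. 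This sandwiching turns $\overline\rho_A$ into (a projection of) $\rho_A$ in the operator bound with \emph{no} multiplicative loss, giving $\widetilde\rho_{AB}\le2^\lambda\rho_A\ox\sigma_B$ directly; and a short Uhlmann-fidelity computation shows $P(\widetilde\rho,\overline\rho)\le P(\rho_A,\overline\rho_A)\le\ve/2$, whence $P(\widetilde\rho,\rho)\le\ve$ by the triangle inequality. This conjugation trick is the missing ingredient in your sketch.
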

\begin{proof}
	Recall the definitions of the smooth max-information for quantum states:
	\begin{align}		
		\widehat I_{\max}^\ve(A:B)_\rho & = \inf_{\substack{\widetilde \rho \approx^\ve \rho\\ \widetilde \rho_A = \rho_A}} \inf_{\sigma_B} D_{\max}(\widetilde \rho_{AB}\| \rho_A \ox \sigma_B),\\		
		I_{\max}^\ve(A:B)_\rho & = \inf_{\widetilde \rho \approx^\ve \rho} \inf_{\sigma_B} D_{\max}(\widetilde \rho_{AB}\|\widetilde \rho_A \ox \sigma_B).
	\end{align}
	In~\cite{Anshu2018a} the authors also discuss the variation
	\begin{align}
		\widetilde I_{\max}^\ve(A:B)_\rho & = \inf_{\widetilde \rho \approx^\ve \rho} \inf_{\sigma_B} D_{\max}(\widetilde \rho_{AB}\|\rho_A \ox \sigma_B),
	\end{align}
	where the marginal state in the second term is fixed to be $\rho_A$.
	It was shown in~\cite{Anshu2018a} that for any quantum state $\rho_{AB}$ and $\ve \in (0,1)$ we have 
	\begin{align}
		\widehat I_{\max}^\ve(A:B)_\rho \leq \widetilde I_{\max}^{\ve/3}(A:B)_\rho + \log(1+72/\ve^2).
	\end{align}
	To show the result as Eq.~\eqref{appen eq:I max lemma}, we only need to prove 
	\begin{align}\label{appendix I max two}
		\widetilde I_{\max}^{\ve}(A:B)_\rho \leq I_{\max}^{\ve/2}(A:B)_\rho.
	\end{align}
	Denote $I_{\max}^{\ve/2}(A:B)_\rho = \lambda$ and suppose the optimal solution is taken at $\overline \rho_{AB}$ and $\sigma_B$. Let 
	\begin{align}
	\widetilde \rho_{AB} = \rho_A^{\frac12} V_A \overline \rho_A^{-\frac12} \overline\rho_{AB} \overline \rho_A^{-\frac12} V_A^\dagger \rho_A^{\frac12},
	\end{align}
	where $V_A$ is the unitary such that $F(\rho_A,\overline \rho_A) = \tr \rho_A^{\frac12} \overline \rho_A^{\frac12} V_A$.
By direct calculation, we have
\begin{align}
\widetilde \rho_{AB} = \rho_A^{\frac12} V_A \overline \rho_A^{-\frac12} \overline\rho_{AB} \overline \rho_A^{-\frac12} V_A^\dagger \rho_A^{\frac12} \leq 2^{\lambda} \cdot \rho_A^{\frac12} V_A P_{\overline \rho_A} V_A^\dagger \rho_A^{\frac12} \ox \sigma_B \leq 2^{\lambda} \rho_A \ox \sigma_B,
\end{align}
where $P_{\overline \rho_A}$ is the projector on the support of $\overline \rho_A$. Then Eq.~\eqref{appendix I max two} follows as soon as we show $\widetilde \rho_{AB} \approx^\ve \rho_{AB}$. This is actually shown in the final steps of the proof for Theorem 2 in~\cite{Anshu2018a}. We repeat these steps here for completeness. 
Deonte the purification of $\bar \rho_{AB}$ as $\ket{\bar \rho_{ABC}}= \bar \rho_A^{\frac12} \ket{\Phi}_{A:BC}$ where $\ket{\Phi}_{A:BC}$ denotes the non-normalized maximally entangled pure state in the cut $A:BC$. Then $\ket{\widetilde \rho_{ABC}} = \rho_A^{\frac12} V_A \overline \rho_A^{-\frac12} \ket{\bar \rho_{ABC}}$ is a purification of $\widetilde \rho_{AB}$. We have 
\begin{align}
     F^2(\widetilde \rho_{AB},\bar \rho_{AB}) & \geq F^2(\widetilde \rho_{ABC},\bar \rho_{ABC})  = |\<\bar \rho_{ABC}|\widetilde \rho_{ABC}\>|^2\notag\\ 
     &\quad = \big|\<{\Phi}_{A:BC}\big|\bar \rho_A^{\frac12} \rho_A^{\frac12} V_A \overline \rho_A^{-\frac12} \bar \rho_A^{\frac12} \ket{\Phi}_{A:BC}\>\big|^2
      = \big|\tr \bar \rho_A^{\frac12} \rho_A^{\frac12} V_A\big|^2 = F^2(\rho_A,\bar\rho_A),
 \end{align}
 which implies $P(\widetilde \rho_{AB},\overline \rho_{AB}) \leq P(\rho_A, \overline \rho_A)$. Thus it holds
\begin{align}
	P(\widetilde \rho_{AB},\rho_{AB}) & \leq P(\widetilde \rho_{AB}, \overline \rho_{AB}) + P(\overline \rho_{AB}, \rho_{AB})\notag\\ 
    & \quad \leq P(\rho_A, \overline \rho_A) + P(\overline \rho_{AB}, \rho_{AB}) \leq 2 P(\overline \rho_{AB}, \rho_{AB}) \leq \ve,
\end{align}
which completes the proof.
\end{proof}

\begin{lemma}\label{two sets same}
For any pure state $\phi_{AA'}$ and quantum state $\rho_{AB}$ such that $\phi_A = \rho_A$, the following two sets are the same,
\begin{align}
\left\{\cN_{A'\to B}(\phi_{AA'}) \approx^{\ve} \rho_{AB}\ |\ \cN \in \ \rm{CPTP}(A':B)\right\} = \left\{\sigma_{AB} \approx^{\ve} \rho_{AB}\ |\ \sigma_{A} = \rho_A\right\}.
\end{align}
\end{lemma}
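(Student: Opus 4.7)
The plan is to prove the two inclusions separately. The forward containment $\subseteq$ is essentially trivial, while $\supseteq$ reduces to the uniqueness of purifications.

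First, I would show $\subseteq$. Given any CPTP map $\cN_{A' \to B}$ such that $\cN_{A'\to B}(\phi_{AA'}) \approx^\ve \rho_{AB}$, set $\sigma_{AB} := \cN_{A'\to B}(\phi_{AA'})$. Since $\cN$ is trace-preserving, $\sigma_A = \tr_B \cN_{A'\to B}(\phi_{AA'}) = \phi_A = \rho_A$, placing $\sigma_{AB}$ in the right-hand set.

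For the reverse containment $\supseteq$, suppose $\sigma_{AB}$ satisfies $\sigma_A = \rho_A$ and $\sigma_{AB} \approx^\ve \rho_{AB}$. I need to construct a CPTP map $\cN_{A'\to B}$ with $\cN_{A'\to B}(\phi_{AA'}) = \sigma_{AB}$. The idea is to pick a purification $\ket{\psi}_{ABE}$ of $\sigma_{AB}$ on an auxiliary system $E$ chosen large enough that $\dim(BE) \geq \dim(A')$. Since $\phi_{AA'}$ is pure with $\phi_A = \rho_A = \sigma_A$, both $\ket{\phi}_{AA'}$ and $\ket{\psi}_{ABE}$ are purifications of the same reduced state on $A$. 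By the uniqueness of purifications there exists an isometry $V\colon A' \to BE$ such that $(\1_A \ox V)\ket{\phi}_{AA'} = \ket{\psi}_{ABE}$; concretely, write $\ket{\phi}_{AA'} = \sum_i \sqrt{p_i}\ket{i}_A\ket{i}_{A'}$ in Schmidt form, define $V$ on $\{\ket{i}_{A'}\}$ so that the required identity holds on the support of $\phi_{A'}$, and extend it isometrically to an orthonormal basis of $A'$ using the extra dimensions of $BE$. Then the map $\cN(X) := \tr_E(V X V^\dagger)$ is CPTP, and
\begin{align}
\cN_{A'\to B}(\phi_{AA'}) = \tr_E\bigl((\1_A \ox V)\ketbra{\phi}{\phi}_{AA'}(\1_A \ox V^\dagger)\bigr) = \tr_E \ketbra{\psi}{\psi}_{ABE} = \sigma_{AB},
\end{align}
so $\sigma_{AB} = \cN_{A'\to B}(\phi_{AA'}) \approx^\ve \rho_{AB}$ belongs to the left-hand set.

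There is no real obstacle here; the only mild subtlety is the dimension bookkeeping for the purifying system $E$, which must be chosen with $\dim E \geq \lceil \dim(A')/\dim(B)\rceil$ so that the partial isometry from the uniqueness-of-purifications theorem can be extended to a genuine isometry on all of $A'$. Once this is set up, both inclusions follow cleanly.
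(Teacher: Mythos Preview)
Your proof is correct. Both you and the paper handle $\subseteq$ trivially and reduce $\supseteq$ to the uniqueness of purifications, but the constructions of the channel differ. The paper works through the Choi--Jamio\l{}kowski isomorphism: given $\sigma_{AB}$ with $\sigma_A=\rho_A$, it forms $\overline\sigma_{AB}=\sigma_A^{-1/2}\sigma_{AB}\sigma_A^{-1/2}$, observes that $\overline\sigma_A=\1_A$ so $\overline\sigma_{AB}$ is the Choi matrix of some CPTP map $\cN$, and then uses Uhlmann's theorem to relate the specific purification $\psi_{AA'}=\sigma_A^{1/2}\Phi_{AA'}\sigma_A^{1/2}$ to the given $\phi_{AA'}$ by a unitary on $A'$. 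You instead go through a Stinespring-type construction: purify $\sigma_{AB}$ to $\ket{\psi}_{ABE}$, invoke uniqueness of purifications to get an isometry $V:A'\to BE$, and set $\cN=\tr_E(V\cdot V^\dagger)$. Your route avoids inverting $\sigma_A$ (the paper implicitly assumes full rank, which holds in its application to the de Finetti state but is not part of the lemma's hypotheses), at the cost of the dimension bookkeeping you flag for the environment $E$. Both are short and natural; yours is arguably cleaner when $\sigma_A$ may be rank-deficient.
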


\begin{proof}
Denote the L.H.S and R.H.S as $\cS_1$ and $\cS_2$ respectively. It is clear that $\cS_1 \subseteq \cS_2$ and we now show the other direction. For any quantum state $\sigma_{AB} \in \cS_2$, denote $\overline \sigma_{AB} = \sigma_A^{-1/2}\sigma_{AB} \sigma_A^{-1/2}$. Then, we have $\overline \sigma_{AB} \geq 0$ and $\overline \sigma_{A} = \1_A$. From the Choi-Jamio\l{}kowski isomorphism, we know that there exists a CPTP map $\cN_{A'\to B}$ such that $\overline \sigma_{AB} = \cN_{A'\to B}(\Phi_{AA'})$, where $\Phi_{AA'}$ denotes the un-normalized maximally entangled state. Thus, we get $\sigma_{AB} = \cN_{A'\to B}(\sigma_A^{1/2}\Phi_{AA'} \sigma_A^{1/2})$. Denoting $\psi_{AA'} = \sigma_A^{1/2}\Phi_{AA'} \sigma_A^{1/2}$, we have that $\psi_{AA'}$ is a purification of $\sigma_A$ and since $\sigma_A = \rho_A = \phi_A$ we get that $\phi_{AA'}$ is also a purification of $\sigma_A$. Due to Uhlmann's theorem~\cite{Uhlmann1976}, there exists a unitary $U$ on the system $A'$ such that $\psi_{AA'} = \mathcal{U}(\phi_{AA'})$ with $\mathcal{U}(\cdot) = U \cdot U^\dagger$. Hence, we find $\sigma_{AB} = \cN \circ \mathcal{U} (\phi_{AA'})\in \cS_1$. This completes the proof.
\end{proof}

\begin{lemma} \label{two sets same 1}
For any quantum channel $\cN_{A'\to B}$ and $\ve \in (0,1)$, denote the optimal values 
     \begin{align}
         k_i := & \inf_{\sigma^n_{RAB} \in K_i} I_{\max}(RA:B)_{\sigma^n_{RAB}} \quad \text{with}\quad i = 1,2 \quad \text{and}\\
      K_1:=& \left\{\sigma^n_{RAB} \Big|\, \sigma^n_{RAB} \approx^{\ve} \cN_{A'\to B}^{\ox n}(\o_{RAA'}^n), \sigma^n_{RA} = \o^n_{RA} \right\},\\
       K_2:=& \left\{\sigma^n_{RAB} \Big|\, \sigma^n_{RAB} = \widetilde \cN_{A'\to B}^n (\o_{RAA'}^n) \approx^{\ve} \cN_{A'\to B}^{\ox n}(\o_{RAA'}^n), \,\widetilde \cN^n \in \text{\rm Perm}(A'^n:B^n)\right\}.
     \end{align}
    where $\o_{RAA'}^n$ is the purification of the de Finetti state $\o^n_{AA'} = \int \sigma_{AA'}^{\ox n} d(\sigma_{AA'})$ with pure states $\sigma_{AA'} = \ket{\sigma}\bra{\sigma}_{AA'}$ and $d(\cdot)$ the measure on the normalized pure states induced by the Haar measure. Then $k_1 = k_2$.
\end{lemma}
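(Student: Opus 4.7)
The plan is to sandwich both values between an intermediate quantity
\begin{align*}
k_3 \vcentcolon= \inf\big\{\, I_{\max}(RA{:}B)_{\widetilde\cN^n(\omega^n_{RAA'})} \,\big|\, \widetilde\cN^n \in \text{CPTP}(A'^n{:}B^n),\ \widetilde\cN^n(\omega^n_{RAA'}) \approx^{\ve} \cN^{\ox n}(\omega^n_{RAA'}) \,\big\},
\end{align*}
obtained from $K_2$ by dropping the permutation-invariance constraint on $\widetilde\cN^n$. The inequality $k_1 \leq k_2$ is immediate from the inclusion $K_2 \subseteq K_1$, since trace-preservation of $\widetilde\cN^n$ on $A'^n$ forces the $RA$-marginal of its output to equal $\omega^n_{RA}$. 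Everything therefore reduces to $k_1 \geq k_2$, which I will split as $k_1 = k_3$ (via Lemma~\ref{two sets same}) followed by $k_3 \geq k_2$ (via symmetrization).

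For $k_1 = k_3$, I will apply Lemma~\ref{two sets same} with the substitutions $\phi_{AA'} \rightsquigarrow \omega^n_{RAA'}$ (which is pure by construction) and $\rho_{AB} \rightsquigarrow \cN^{\ox n}(\omega^n_{RAA'})$, where the lemma's ``$A$'' plays the role of the combined system $RA$ and ``$A'$'' plays the role of $A'^n$. The hypothesis $\phi_A = \rho_A$ reads $\omega^n_{RA} = \omega^n_{RA}$ and is automatic since $\cN^{\ox n}$ is trace-preserving on $A'^n$ and acts trivially on $RA$. The lemma then gives the set equality $K_1 = K_3$ (with $K_3$ the feasible set defining $k_3$), whence $k_1 = k_3$.

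For $k_3 \geq k_2$, given any feasible $\widetilde\cN^n$ for $k_3$ with output $\sigma^n_{RAB} \vcentcolon= \widetilde\cN^n(\omega^n_{RAA'})$, I will symmetrize to $\widetilde\cN^n_{\text{sym}} \vcentcolon= \frac{1}{n!}\sum_\pi (\cU^B_\pi)^{-1} \circ \widetilde\cN^n \circ \cU^{A'}_\pi \in \text{Perm}(A'^n{:}B^n)$, where $\cU^\bullet_\pi$ denotes conjugation by the permutation unitary on $\bullet^n$. The structural input is that, since the de Finetti marginal $\omega^n_{AA'}$ is permutation invariant, Uhlmann's theorem supplies unitaries $V_\pi$ on $R$ with $(V_\pi \ox U^A_\pi \ox U^{A'}_\pi)|\omega^n\rangle_{RAA'} = |\omega^n\rangle_{RAA'}$. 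Solving for $U^{A'}_\pi|\omega^n\rangle$ and using that $\widetilde\cN^n$ acts only on $A'^n$ yields
\begin{align*}
\widetilde\cN^n_{\text{sym}}(\omega^n_{RAA'}) = \frac{1}{n!}\sum_\pi \cT_\pi(\sigma^n_{RAB}), \qquad \cT_\pi \vcentcolon= (\cU^B_\pi)^{-1} \circ \cU^R_{V_\pi^\dagger} \circ \cU^A_{(U^A_\pi)^\dagger},
\end{align*}
and each $\cT_\pi$ is a local unitary with respect to the bipartition $RA{:}B$.

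The same Uhlmann identity combined with the tensor-power covariance $\cN^{\ox n} \circ \cU^{A'}_\pi = \cU^B_\pi \circ \cN^{\ox n}$ shows that $\cN^{\ox n}(\omega^n_{RAA'})$ is $\cT_\pi$-invariant for every $\pi$. Joint concavity of the fidelity then preserves the $\ve$-closeness of the convex combination, while quasi-convexity of $I_{\max}(RA{:}B)$ together with its invariance under local unitaries on either side of the cut gives $I_{\max}(RA{:}B)_{\widetilde\cN^n_{\text{sym}}(\omega^n_{RAA'})} \leq I_{\max}(RA{:}B)_{\sigma^n_{RAB}}$. This produces a permutation-invariant feasible point in $K_2$ with no larger objective, proving $k_2 \leq k_3$ and closing the chain $k_1 \leq k_2 \leq k_3 = k_1$. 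The main obstacle is the bookkeeping in the last paragraph: one must carefully transfer the $A'$-side symmetrization of $\widetilde\cN^n$ into a local unitary action on the bipartition $RA{:}B$ via the Uhlmann unitaries $V_\pi$, and this is precisely what the purity of $\omega^n_{RAA'}$ makes possible.
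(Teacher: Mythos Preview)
Your proof is correct and follows the same overall skeleton as the paper: the easy inclusion $K_2\subseteq K_1$, Lemma~\ref{two sets same} to realize any $\sigma^n\in K_1$ as a channel output, Uhlmann's theorem to trade the $A'$-permutation for a unitary on the reference, and then a twirling argument to land in $K_2$ without increasing the objective.

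The one noteworthy variation is in how you justify that symmetrization does not increase $I_{\max}(RA{:}B)$. The paper interprets the symmetrized output as $\Theta(\widetilde\cN^n)(\omega^n_{RAA'})$ for the superchannel $\Theta(\cdot)=\frac{1}{n!}\sum_\pi \pi_B^n(\cdot)\pi_{A'}^n$ and invokes input-state independence (Remark~\ref{rem: Imax input}) together with superchannel monotonicity (Remark~\ref{rem: Imax data-processing}) of the \emph{channel} max-information. You instead stay at the state level: you observe that each $\cT_\pi$ is a local unitary across the cut $RA{:}B$ which fixes the $RA$-marginal $\omega^n_{RA}$, and then use local-unitary invariance plus quasi-convexity of the \emph{state} max-information (valid here precisely because all summands share the same $RA$-marginal). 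Your route is slightly more elementary in that it avoids the superchannel formalism, while the paper's route makes the data-processing structure more transparent; both yield the same inequality.
\qed
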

\begin{proof}
    It is clear that $K_2 \subseteq K_1$ thus $k_1 \leq k_2$. We need to show the opposite direction. In the following, let us consider $R^nA^n$ as the reference system. For any optimal quantum state $\sigma_{RAB}^n \in K_1$, according to Lemma~\ref{two sets same}, there exists a quantum channel $\widetilde \cN_{A'\to B}^n$ such that 
    \begin{align}
        \sigma_{RAB}^n = \widetilde \cN_{A'\to B}^n (\o_{RAA'}^n)\approx^{\ve} \cN_{A'\to B}^{\ox n}(\o_{RAA'}^n).\label{two set tmp 2}
    \end{align}
     Then for any permutation operation $\pi^n$, we have $\pi^n_{A'} (\o_{A'}^n) = \o_{A'}^n$. Then both $\pi^n_{A'} (\o_{RAA'}^n)$ and $\o_{RAA'}^n$ are purifications of $\o_{A'}^n$. By Uhlmann's theorem~\cite{Uhlmann1976}, there exists a unitary $U^n_{\pi,RA}$ acting on the reference system $R^n A^n$ such that $\pi^n_{A'} (\o_{RAA'}^n) = \mathcal U_{\pi,RA}^n (\o_{RAA'}^n)$ with $\mathcal U_{\pi,RA}^n(\cdot) = U^n_{\pi,RA} (\cdot) (U^n_{\pi,RA})^\dagger$. Then we have
    \begin{align}
        \pi^n_B \circ \widetilde \cN_{A'\to B}^n  \circ \pi^n_{A'} (\o_{RAA'}^n) = \pi^n_B \circ \widetilde \cN_{A'\to B}^n  \circ \mathcal U^n_{\pi,RA} (\o_{RAA'}^n) = \mathcal U^n_{\pi,RA}\circ \pi^n_B \circ \widetilde \cN_{A'\to B}^n   (\o_{RAA'}^n),\label{two set eq tmp 1}\\
        \pi^n_B \circ  \cN_{A'\to B}^{\ox n}  \circ \pi^n_{A'} (\o_{RAA'}^n) = \pi^n_B \circ \cN_{A'\to B}^{\ox n}  \circ \mathcal U^n_{\pi,RA} (\o_{RAA'}^n) = \mathcal U^n_{\pi,RA}\circ \pi^n_B \circ\cN_{A'\to B}^{\ox n}   (\o_{RAA'}^n).
    \end{align}
    Since the purified distance is invariant under unitary operations and $\widetilde \cN_{A'\to B}^n (\o_{RAA'}^n)\approx^{\ve} \cN_{A'\to B}^{\ox n}(\o_{RAA'}^n)$, we have 
    \begin{align}
        \pi^n_B \circ \widetilde \cN_{A'\to B}^n  \circ \pi^n_{A'} (\o_{RAA'}^n) \approx^{\ve} \pi^n_B \circ  \cN_{A'\to B}^{\ox n}  \circ \pi^n_{A'} (\o_{RAA'}^n) = \cN_{A'\to B}^{\ox n} (\o_{RAA'}^n),
    \end{align}
    where the equality follows from the permutation invariance of $\cN_{A'\to B}^{\ox n}$. Due to the convexity of the purified distance, we have
    \begin{align}
        \widetilde \sigma^n_{RAB}:=\frac{1}{n!} \sum_{\pi_n} \pi^n_B \circ \widetilde \cN_{A'\to B}^n  \circ \pi^n_{A'} (\o_{RAA'}^n) \approx^{\ve}\cN_{A'\to B}^{\ox n} (\o_{RAA'}^n).\label{two set eq tmp  3}
    \end{align}
    Note that $\frac{1}{n!} \sum_{\pi_n} \pi^n_B \circ \widetilde \cN_{A'\to B}^n  \circ \pi^n_{A'}$ is permutation invariant, which implies $\widetilde \sigma^n_{RAB} \in K_2$. 
    Then we have
\begin{align}
    k_2 & \leq I_{\max}(RA:B)_{\widetilde \sigma^n_{RAB}}\\
    & = I_{\max}(RA:B)_{\Theta(\widetilde \cN_{A'\to B}^n)}\\
    & \leq I_{\max}(RA:B)_{\widetilde \cN_{A'\to B}^n} \\
    & = I_{\max}(RA:B)_{\sigma_{RAB}^n}\\
    & = k_1,
\end{align}
where the first and second equalities follows from the fact that channel's max-information is independent on the input state (see Remark~\ref{rem: Imax input}), the second inequality follows by the monotonicity of the channel's max-information (see Remark~\ref{rem: Imax data-processing}) under the superchannel $\Theta(\cdot) =  \frac{1}{n!} \sum_{\pi_n} \pi^n_B (\cdot) \pi^n_{A'}$, the last equality follows from the optimality assumption of $\sigma_{RAB}^n$. This completes the proof.
\end{proof}

\begin{lemma}\label{continuity of mutual information}     
For any quantum states $\rho_{AB}$ and $\sigma_{AB}$ such that $\rho_A = \sigma_A$ and $\frac12\|\rho-\sigma\|_1 \leq \ve$, it holds that
\begin{align}
     |I(A:B)_{\rho}-I(A:B)_\sigma| \leq 2\ve \log |A| + (1+\ve)h_2\left(\frac{\ve}{1+\ve}\right),
 \end{align}
 where $h_2(\cdot)$ is the binary entropy.
\end{lemma}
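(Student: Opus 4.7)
The plan is to reduce the continuity bound for the mutual information to a known continuity bound for the conditional entropy. Specifically, I would rewrite the mutual information as
\[
I(A:B)_\rho = H(A)_\rho - H(A|B)_\rho,
\]
and use the hypothesis $\rho_A = \sigma_A$, which gives $H(A)_\rho = H(A)_\sigma$. Subtracting the two expressions then yields
\[
I(A:B)_\rho - I(A:B)_\sigma = H(A|B)_\sigma - H(A|B)_\rho,
\]
so the desired bound reduces to a continuity estimate for the conditional von Neumann entropy.

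The second step is to invoke the Alicki--Fannes--Winter continuity bound for the conditional entropy: whenever $\frac12\|\rho_{AB} - \sigma_{AB}\|_1 \leq \ve$,
\[
|H(A|B)_\rho - H(A|B)_\sigma| \leq 2\ve \log |A| + (1+\ve)\, h_2\!\left(\frac{\ve}{1+\ve}\right).
\]
Combining this with the identity from the first step immediately delivers the claim. No further work is required, since the trace-distance bound on $\rho_{AB}$ and $\sigma_{AB}$ in the hypothesis is exactly the input needed for the AFW bound.

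There is no real obstacle here; the lemma is essentially a one-line consequence of the AFW continuity bound applied after recognizing that the $H(A)$ terms cancel. The only subtlety is the choice of decomposition of the mutual information: one must use $I(A:B) = H(A) - H(A|B)$ rather than $H(B) - H(B|A)$ in order to exploit the hypothesis $\rho_A = \sigma_A$ (since we do not assume $\rho_B = \sigma_B$) and to obtain the constant $\log|A|$ rather than $\log|B|$ in the final bound.
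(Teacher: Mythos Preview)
Your proposal is correct and matches the paper's proof essentially line for line: the paper also writes $I(A:B)_\rho = H(A)_\rho - H(A|B)_\rho$, uses $\rho_A=\sigma_A$ to cancel the $H(A)$ terms, and then invokes the Alicki--Fannes--Winter continuity bound for the conditional entropy (cited there as Lemma~2 of Winter's paper).
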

\begin{proof}
Since $I(A:B)_\rho = H(A)_\rho - H(A|B)_\rho$, we have 
\begin{align}
|I(A:B)_{\rho}-I(A:B)_\sigma| = |H(A|B)_{\rho} - H(A|B)_{\sigma}|\leq 2\ve \log |A| + (1+\ve)h_2\left(\frac{\ve}{1+\ve}\right),
\end{align}
where $H(A)$ and $H(A|B)$ are von Neumann entropy and conditional entropy respectively. The second inequality follows from~\cite[Lemma 2.]{Winter2016}.
\end{proof}

\begin{lemma}\label{compose channel lemma}
	Suppose the effective channel $\widetilde \cM_{A_i\to B_o} = \Pi_{A_iB_i\to A_oB_o}\circ \cN_{A_o\to B_i}$ with quantum channel $\cN$ and bipartite no-signalling quantum supermap $\Pi$, then we have the relation in terms of the their \mbox{\Choi} matrices $J^{\widetilde{\cM}}_{A_iB_o} = \tr_{A_oB_i} [(J_{A_oB_i}^{\cN})^T\ox\1_{A_iB_o}]J_{A_iB_iA_oB_o}^{\Pi}$.
\end{lemma}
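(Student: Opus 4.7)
The identity is the link-product formula for the Choi matrix of a supermap composition. The plan is to derive it directly from the defining Choi--Jamio\l{}kowski action formula
\begin{align*}
\cE(\rho_X) = \tr_X\big[(\rho_X^T \ox \1_Y)\, J_\cE\big]
\end{align*}
for any CP map $\cE: X \to Y$, applied to $\Pi$ viewed as a bipartite channel $A_iB_i \to A_oB_o$ and to $\cN: A_o \to B_i$.

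For an arbitrary input $\rho_{A_i}$, I would first write out $\widetilde \cM(\rho_{A_i})$ by unrolling the feedback structure: $\Pi$ receives $A_i$ together with $B_i$, where $B_i$ carries $\cN$'s output, while $\cN$ receives $\Pi$'s $A_o$ output. Substituting the action formula twice---once for $\Pi$ on the pair $A_iB_i$ and once for $\cN$ on $A_o$---and collapsing the internal $A_o,B_i$ wires via the identity $\sum_{k,l}\ketbra{k}{l}\ox\ketbra{k}{l}$ (which realises a partial transpose when contracted) yields an expression of the form
\begin{align*}
\widetilde \cM(\rho_{A_i}) = \tr_{A_i}\!\Big[(\rho_{A_i}^T \ox \1_{B_o}) \cdot \tr_{A_oB_i}\!\big[((J_\cN^{A_oB_i})^T \ox \1_{A_iB_o})\, J_\Pi\big]\Big].
\end{align*}
Comparing with the action formula applied directly to $\widetilde \cM$ identifies the inner operator on $A_iB_o$ as $J_{\widetilde \cM}^{A_iB_o}$, which is exactly the claim.

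The main obstacle is careful bookkeeping of the transposes: each use of the action formula contributes one transpose on the input system, and one must confirm that the combined effect is precisely a single full transpose on $J_\cN$ over the joint system $A_oB_i$, with $J_\Pi$ appearing untransposed. A shortcut that bypasses most of this bookkeeping is to exploit the fact that both sides of the identity are bilinear in $J_\cN$ and $J_\Pi$, and to verify it on a spanning set of rank-one Choi matrices $J_\cN = \ketbra{k}{l}_{A_o} \ox \ketbra{m}{n}_{B_i}$ (corresponding to Hermiticity-breaking elementary maps, which still span the relevant operator space). On such basis elements both sides reduce to explicit rank-one operators whose equality can be checked by direct index manipulation, and the general case follows by linearity.
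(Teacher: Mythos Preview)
Your plan diverges from the paper's argument, and there is a genuine gap in how you propose to handle the feedback.

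The crux of the lemma is that $\widetilde\cM=\Pi\circ\cN$ is \emph{not} an ordinary sequential composition: $\Pi$'s input $B_i$ is the output of $\cN$, but $\cN$'s input $A_o$ is an output of $\Pi$. So when you write ``substitute the action formula for $\Pi$ on the pair $A_iB_i$,'' there is no state on $A_iB_i$ to substitute---the $B_i$ register only exists after $\Pi$ has already acted to produce $A_o$. Your phrase ``unrolling the feedback structure'' is exactly the step that needs content, and the content is the $B\to A$ no-signalling hypothesis: it guarantees a factorisation $\Pi=(\id_{A_o}\otimes\cD_{E_oB_i\to B_o})\circ(\cF_{E_i\to E_o}\otimes\id_{A_oB_i})\circ(\cE_{A_i\to A_oE_i}\otimes\id_{B_i})$, which turns the loop into the honest sequential composition $\widetilde\cM=\cD\circ(\cF\otimes\cN)\circ\cE$. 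The paper makes this explicit: it writes both $J_\Pi$ and $J_{\widetilde\cM}$ as iterated Choi contractions of $J_\cE,J_\cF,J_\cD$ (and $J_\cN$), and then eliminates the auxiliary systems $E_i,E_o$ to obtain the stated link-product formula. Without invoking this decomposition (or something equivalent) your ``twice the action formula'' step is circular.

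Your bilinearity shortcut is closer to a valid alternative, but it does not sidestep the issue as cleanly as you suggest. Linearity of the right-hand side in $J_\cN$ is obvious; linearity of the \emph{left}-hand side in $J_\cN$ is only clear once you know $\widetilde\cM=\cD\circ(\cF\otimes\cN)\circ\cE$, i.e.\ once you have already used the decomposition. Moreover, to verify the identity on a rank-one $J_\cN=\ketbra{k}{l}_{A_o}\otimes\ketbra{m}{n}_{B_i}$ you must still evaluate $\Pi\circ\cN_{kl,mn}$ operationally. It is true that for such $\cN$ the feedback degenerates (the $B_i$ output is the fixed operator $\ketbra{m}{n}$ times a scalar that depends on $A_o$), and one can then show $(\Pi\circ\cN_{kl,mn})(\rho)=\bra{k}_{A_o}\,\Pi(\rho_{A_i}\otimes\ketbra{m}{n}_{B_i})\,\ket{l}_{A_o}$; but justifying this equality rigorously again requires splitting $\Pi$ into pre- and post-$\cN$ pieces. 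So the bilinearity route can be made to work, but only after you pay the same no-signalling toll the paper pays up front.
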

\begin{proof}
	This result is widely used when considering no-signalling codes, e.g.,~\cite{Leung2015c,Duan2016}. We include its proof here only for completeness. Since $\Pi_{A_iB_i\to A_oB_o}$ is required to be $B$ to $A$ no-signalling, we can write it as $\Pi_{A_iB_i\to A_oB_o} = \cD_{E_oB_i\to B_o} \cF_{E_i\to E_o} \cE_{A_i \to A_o E_i}$ with quantum operations $\cE,\cF,\cD$. Note that the inverse \Choi isomorphism is given by $\cN_{A\to B}(X_A) = \tr_A J^{\cN}_{AB}(X_A^T \ox \1_B)$. In the following we will not explicitly write out the identity $\1_B$ for simplicity. For any operator $X_{A_iB_i}$, we have 
    \begin{align}
        \Pi_{A_iB_i\to A_oB_o}(X_{A_iB_i}) & = \cD_{E_oB_i\to B_o} \cF_{E_i\to E_o} \cE_{A_i \to A_o E_i}(X_{A_iB_i})\\
        & = \cD_{E_oB_i\to B_o} \cF_{E_i\to E_o} \tr_{A_i} J^\cE_{A_iA_o E_i} X_{A_iB_i}^{T_{A_i}}\\
        & = \cD_{E_oB_i\to B_o} \tr_{E_i} J^{\cF}_{E_iE_o} \big[\tr_{A_i} (J^\cE_{A_iA_o E_i})^{T_{E_i}} X_{A_iB_i}^{T_{A_i}}\big]\\
        & = \tr_{E_oB_i} J^{\cD}_{E_oB_iB_o} \Big[\tr_{E_i} (J^{\cF}_{E_iE_o})^{T_{E_o}} \big[\tr_{A_i} (J^\cE_{A_iA_o E_i})^{T_{E_i}} X_{A_iB_i}^{T_{A_iB_i}}\big]\Big]\\
        & = \tr_{A_iB_i} \Big[\tr_{E_iE_o} J^{\cD}_{E_oB_iB_o}(J^{\cF}_{E_iE_o})^{T_{E_o}}(J^\cE_{A_iA_o E_i})^{T_{E_i}}\Big] X_{A_iB_i}^{T_{A_iB_i}}.
    \end{align}
    Thus we have 
    \begin{align}\label{channel compose tmp 1}
        J^{\Pi}_{A_iB_iA_oB_o} = \tr_{E_iE_o} J^{\cD}_{E_oB_iB_o}(J^{\cF}_{E_iE_o})^{T_{E_o}}(J^\cE_{A_iA_o E_i})^{T_{E_i}}.
    \end{align} 
    Repeating the above steps again for $\widetilde \cM_{A_i\to B_o}(X_{A_i}) = \cD_{E_oB_i\to B_o} \cF_{E_i\to E_o} \cE_{A_i \to A_o E_i} \cN_{A_o\to B_i} (X_{A_i})$, we have 
    \begin{align}
        J^{\widetilde M}_{A_iB_o} = \tr_{A_oB_iE_iE_o} J^{\cD}_{E_oB_iB_o}(J^{\cF}_{E_iE_o})^{T_{E_o}}(J^\cE_{A_iA_o E_i})^{T_{A_oE_i}} (J_{A_oB_i}^{\cN})^{T_{B_i}}
    \end{align}
    By Eq.~\eqref{channel compose tmp 1}, we have 
    \begin{align}
        J^{\widetilde M}_{A_iB_o} = \tr_{A_oB_i} (J^{\Pi}_{A_iB_iA_oB_o})^{T_{A_o}} (J_{A_oB_i}^{\cN})^{T_{B_i}} = \tr_{A_oB_i} (J_{A_oB_i}^{\cN})^{T_{A_oB_i}} J^{\Pi}_{A_iB_iA_oB_o},
    \end{align}
    which completes the proof.
\end{proof}

\end{appendices}

\end{document}